\documentclass[a4paper, 10pt, twosides]{amsart}

\usepackage{lmodern}
\usepackage{enumitem}
\setenumerate{nolistsep}
\usepackage{dsfont}
\usepackage[usenames, dvipsnames]{color}
%\usepackage{tikz}
%\usetikzlibrary{automata, positioning}
\usepackage{subfig}
\usepackage{todonotes}
\usepackage[colorlinks=true,
		%raiselinks=true,
		linkcolor=MidnightBlue,
		citecolor=Mahogany,
		urlcolor=ForestGreen,
		pdfauthor={Atreyee Kundu and Debasish Chatterjee},
		pdftitle={ISS of continuous-time switched systems},
		pdfkeywords={switched systems, input-to-state stability},
		pdfsubject={Technical Report},
		plainpages=false]{hyperref}

\usepackage{mathtools}
\usepackage{mathabx}

\allowdisplaybreaks
\setlength{\parskip}{1ex}

%%%%%%%%%%%%%%%%%%%%%%%%%%%%%%%%%%%%%%%%%%%%%%%%%%%%%%%%%%%%%%%%%%%%%%%%%%%%%%%%%%%%%%%%%%%%%%%%%%%%%%%%%%%%%

\newtheorem{theorem}{Theorem}[section]

\newtheorem{lemma}[theorem]{Lemma}

\theoremstyle{definition}
\newtheorem{definition}[theorem]{Definition}

\theoremstyle{remark}
\newtheorem{remark}[theorem]{Remark}

\theoremstyle{assumption}
\newtheorem{assumption}[theorem]{Assumption}

\theoremstyle{fact}

\theoremstyle{claim}

\theoremstyle{proposition}
\newtheorem{proposition}[theorem]{Proposition}

\theoremstyle{cor}

\theoremstyle{algorithm}

\theoremstyle{convention}

\allowdisplaybreaks

%%%%%%%%%%%%%%%%%%%%%%%%%%%%%%%%%%%%%%%%%%%%%%%%%%%%%%%%%%%%%%%%%%%%%%%%%%%%%%%%%%%%%%%%%%%%%%%%%%%%%%%%%%%%

\numberwithin{equation}{section}
%%%%%%%%%%%%%%%%%%%%%%%%%%%%%%%%%%%%%%%%%%%%%%%%%%%%%%%%%%%%%%%%%%%%%%%%%%%%%%%%%%%%%%%%%%%%%%%%%%%%%%%%%%%%
\newcommand{\norm}[1]{\left\lVert{#1}\right\rVert} %norm
\newcommand{\abs}[1]{\left\lvert{#1}\right\rvert} %absolute value

\newcommand{\pmat}[1]{\begin{pmatrix}#1\end{pmatrix}}

\newcommand{\R}{\mathds{R}} %real numbers
 %natural numbers
\renewcommand{\P}{\mathcal{P}}  %script P
   %limsup
   %liminf

\newcommand{\Ntsigma}{\mathrm{N}_{\sigma}}

\newcommand{\K}{\mathcal{K}}
\newcommand{\Kinfty}{\mathcal{K}_{\infty}}
\newcommand{\tendsto}{\rightarrow}
\newcommand{\KL}{\mathcal{KL}}

\newcommand{\lra}{\longrightarrow}

\renewcommand{\Ntsigma}{\mathrm{N}_{\sigma}(0,t)}
\newcommand{\Ntksigma}{\Ntsigma}
\newcommand{\supnormterm}{\gamma\bigl(\norm{v}_{[0,t]}\bigr)}
\newcommand{\ilsum}{\sum_{\substack{i=0\\\tau_{\Ntsigma+1}:=t}}^{\Ntsigma}}
\newcommand{\imsum}{\sum_{i=0}^{\Ntsigma-1}}
\newcommand{\muiterm}{\mu_{\sigma(\tau_{i})\sigma(\tau_{i+1})}}
\newcommand{\klsum}{\sum_{\substack{k=i+1\\\tau_{\Ntksigma+1}:=t}}^{\Ntsigma}}
\newcommand{\mukterm}{\mu_{\sigma(\tau_{k})\sigma(\tau_{k+1})}}
\newcommand{\kmsum}{\sum_{k=i+1}^{\Ntsigma-1}}
\newcommand{\ijsum}{\sum_{\substack{i:\sigma(\tau_{i})=j\\i=0,\cdots,\Ntsigma\\\tau_{\Ntsigma+1}:=t}}}
\newcommand{\iksum}{\sum_{\substack{i:\sigma(\tau_{i})=k\\i=0,\cdots,\Ntsigma\\\tau_{\Ntsigma+1}:=t}}}
\newcommand{\mnsum}{\displaystyle\sum_{(m,n)\in E(\P)}}
\newcommand{\mncount}{\#\{m\rightarrow n\}}

\newcommand{\stmeasure}{\abs{]s,t]\cap\biggl(\bigcup_{\displaystyle\substack{i=0\\\sigma(\tau_{i})=j}}^{+\infty}]\tau_{i},\tau_{i+1}]\biggr)}}
\newcommand{\ztmeasure}{\abs{]0,t]\cap\biggl(\bigcup_{\displaystyle\substack{i=0\\\sigma(\tau_{i})=j}}^{+\infty}]\tau_{i},\tau_{i+1}]\biggr)}}
\newcommand{\ztkmeasure}{\abs{]0,t]\cap\biggl(\bigcup_{\displaystyle\substack{i=0\\\sigma(\tau_{i})=k}}^{+\infty}]\tau_{i},\tau_{i+1}]\biggr)}}

%%%%%%%%%%%%%%%%%%%%%%%%%%%%%%%%%%%%%%%%%%%%%%%%%%%%%%%%%%%%%%%%%%%%%%%%%%%%%%%%%%%%%%%%%%%%%%%%%%%%%%%%%%%%%%%%%%%%%%%%%%%%%

\begin{document}
\title[ISS of continuous-time switched systems]{Generalized switching signals for input-to-state stability of switched systems}

\author[A.\ Kundu]{Atreyee Kundu}
\address{Systems \& Control Engineering, Indian Institute of Technology Bombay, Mumbai~--~400076, India}
\email[A.\ Kundu]{atreyee@sc.iitb.ac.in}
\author[D.\ Chatterjee]{Debasish Chatterjee}
\address{Systems \& Control Engineering, Indian Institute of Technology Bombay, Mumbai~--~400076, India}
\email[D.\ Chatterjee]{dchatter@iitb.ac.in}
%\urladdr[D.\ Chatterjee]{\url{http://www.sc.iitb.ac.in/~chatterjee}}
\author[D.\ Liberzon]{Daniel Liberzon}
\address{Coordinated Science Laboratory, University of Illinois at Urbana-Champaign, Urbana IL 61801, USA}
\email[D.\ Liberzon]{liberzon@illinois.edu}
\keywords{switched systems, input-to-state stability, multiple-ISS Lyapunov functions}
\date{\today}

\begin{abstract}
    This article deals with input-to-state stability (ISS) of continuous-time switched nonlinear systems. Given a family of systems with exogenous inputs such that not all systems in the family are ISS, we characterize a new and general class of switching signals under which the resulting switched system is ISS. Our stabilizing switching signals allow the number of switches to grow faster than an affine function of the length of a time interval, unlike in the case of average dwell time switching. We also recast a subclass of average dwell time switching signals in our setting and establish analogs of two representative prior results.
\end{abstract}

\maketitle

%%%%%%%%%%%%%%%%%%%%%%%%%%%%%%%%%%%%%%%%%%%%%%%%%%%%%%%%%%%%%%%%%%%%%%%%%%%%%%%%%%%%%%%%%%%%%%%%%%%%%%%%%%%%%%%%%%%%%%%%%%%%%%%%%%%%%%%%%%
\section{Introduction}
\label{s:intro}
    A \emph{switched system} comprises of two components --- a family of systems and a switching signal. The \emph{switching signal} selects an active subsystem at every instant of time, i.e., the system from the family that is currently being followed \cite[\S 1.1.2]{Liberzon}. Stability of switched systems is broadly classified into two categories --- \emph{stability under arbitrary switching} \cite[Chapter 2]{Liberzon} and \emph{stability under constrained switching} \cite[Chapter 3]{Liberzon}. In the former category, conditions on the family of systems are identified such that the resulting switched system is stable under all admissible switching signals; in the latter category, given a family of systems, conditions on the switching signals are identified such that the resulting switched system is stable. In this article our focus is on stability of switched systems with exogenous inputs under constrained switching.

    Prior study in the direction of stability under constrained switching primarily utilizes the concept of \emph{slow switching} vis-a-vis \emph{(average) dwell time switching}. Exponential stability of a switched linear system under \emph{dwell time switching} was studied in \cite{Morse1996}. In \cite{XieWenLi2001} the authors showed that a switched nonlinear system is ISS under dwell time switching if all subsystems are ISS. A class of state-dependent switching signals obeying dwell time property under which a switched nonlinear system is integral input-to-state stable (iISS) was proposed in \cite{DePersis2003}. The dwell time requirement for stability was relaxed to \emph{average dwell time switching} to switched linear systems with inputs and switched nonlinear systems without inputs in \cite{HespanhaMorse}. ISS of switched nonlinear systems under average dwell time was studied in \cite{chatterjee07}. It was shown that if the individual subsystems are ISS and their ISS-Lyapunov functions satisfy suitable conditions, then the switched system has the ISS, exponentially-weighted ISS, and exponentially-weighted iISS properties under switching signals obeying sufficiently large average dwell time. Given a family of systems such that not all systems in the family are ISS, it was shown in the recent work \cite{Yang14} that it is possible to construct a class of hybrid Lyapunov functions to guarantee ISS of the switched system provided that the switching signal neither switches too frequently nor activates the non-ISS subsystems for too long. In \cite{Liberzon_IOSS} input/output-to-state stability (IOSS) of switched nonlinear systems with families in which not all subsystems are IOSS, was studied. It was shown that the switched system is IOSS under a class of switching signals obeying \emph{average dwell time} property and constrained point-wise activation of unstable systems.
    %A \emph{switched system} comprises of two components --- a family of systems and a switching signal. The \emph{switching signal} selects an active subsystem from the family at every instant of time \cite[\S1.1.2]{Liberzon}. Qualitative behaviour of switched systems depends not only on the stability of individual systems in the family, but also on the properties of the switching signal. For example, divergent trajectories may be generated by switching appropriately among stable subsystems, while a suitably constrained switching signal may ensure stability of switched system even if all subsystems are unstable (see e.g., \cite[p.\ 19]{Liberzon} for examples with two subsystems). Due to such interesting features, stability of switched systems has attracted considerable research attention in the past few decades, see \cite{Antsaklis_survey, Shorten_review,Heemels_survey} for detailed surveys.
%
%    Stability of switched systems is broadly classified into two categories --- \emph{stability under arbitrary switching} \cite[Chapter 2]{Liberzon} and \emph{stability under constrained switching} \cite[Chapter 3]{Liberzon}. In the former category, conditions on the family of systems are identified such that the resulting switched system is stable under all admissible switching signals; in the latter category, given a family of systems, conditions on the switching signals are identified such that the resulting switched system is stable. In this article we study stability of switched systems under constrained switching.
%

    Given a family of systems, possibly containing non-ISS dynamics, in this article we study ISS of switched systems under switching signals that transcend beyond the average dwell time regime in the sense that the number of switches on every interval of time can grow faster than an affine function of the length of the interval. Our characterization of stabilizing switching signals involve pointwise constraints on the duration of activation of the ISS and non-ISS systems, and the number of occurrences of the admissible switches, certain pointwise properties of the quantities defining the above constraints, and a summability condition. In particular, our contributions are:
    \begin{itemize}[label=$\circ$,leftmargin=*]
        \item We allow non-ISS systems in the family and identify a class of switching signals under which the resulting switched system is ISS.
        \item Our class of stabilizing switching signals encompasses the average dwell time regime in the sense that on every interval of time the number of switches is allowed to grow faster than an affine function of the length of the interval. Earlier in \cite{TACsub} we proposed a class of switching signals beyond the average dwell time regime for global asymptotic stability of continuous-time switched nonlinear systems.
        \item Although this is not the first instance when non-ISS subsystems are considered (see e.g., \cite{Liberzon_IOSS,Yang14}), to the best of our knowledge, this is the first instance when non-ISS subsystems are considered and the proposed class of stabilizing switching signals goes beyond the average dwell time condition.
        \item We recast a subclass of average dwell time switching signals in our setting and establish analogs of an ISS version of \cite[Theorem 2]{Liberzon_IOSS}, and \cite[Theorem 3.1]{chatterjee07} as two corollaries of our main result.
    \end{itemize}

    The remainder of this article is organized as follows: In \S\ref{s:prelims} we formulate the problem under consideration and catalog certain properties of the family of systems and the switching signal. Our main results appear in \S\ref{s:mainres}, and we provide a numerical example illustrating our main result in \S\ref{s:ex}. In \S\ref{s:discussn} we recast prior results in our setting. The proofs of our main results are presented in a consolidated manner in \S\ref{s:proofs}.

    {\bf Notations}: Let $\R$ denote the set of real numbers, $\norm{\cdot}$ denote the Euclidean norm, and for any interval $I\subset[0,+\infty[$ we denote by $\norm{\cdot}_{I}$ the essential supremum norm of a map from $I$ into some Euclidean space. For measurable sets $A\subset\R$ we let $\abs{A}$ denote the Lebesgue measure of $A$.
%%%%%%%%%%%%%%%%%%%%%%%%%%%%%%%%%%%%%%%%%%%%%%%%%%%%%%%%%%%%%%%%%%%%%%%%%%%%%%%%%%%%%%%%%%%%%%%%%%%%%%%%%%%%%%%%%%%%%%%%%%%%%%%%%%%%%%%%%%%%%%%%

\section{Preliminaries}
\label{s:prelims}
    We consider the \emph{switched system}
    \begin{align}
    \label{issc_e:swsys}
        \dot{x}(t) = f_{\sigma(t)}\bigl(x(t),v(t)\bigr),\:\:x(0) = x_{0}\:\text{(given)},\:\:t\geq 0.
    \end{align}
    generated by
    \begin{itemize}[label = $\circ$, leftmargin = *]
        \item a family of continuous-time systems with exogenous inputs
            \begin{align}
            \label{issc_e:family}
                \dot{x}(t) = f_{i}\bigl(x(t),v(t)\bigr),\:\:x(0) = x_{0}\:\text{(given)},\:\:i\in\P,\:\:t\geq 0,
            \end{align}
        where $x(t)\in\R^{d}$ is the vector of states and $v(t)\in\R^{m}$ is the vector of inputs at time $t$, $\P = \{1,2,\cdots,N\}$ is a finite index set, and
        \item a piecewise constant function $\sigma:[0,+\infty[\lra\P$ that selects, at each time $t$, the index of the active system from the family \eqref{issc_e:family}; this function $\sigma$ is called a \emph{switching signal}. By convention, $\sigma$ is assumed to be continuous from right and having limits from the left everywhere, and we call such switching signals admissible. We let $\mathcal{S}$ denote the set of all such admissible switching signals.
    \end{itemize}
    We assume that for each $i\in\P$, $f_{i}$ is locally Lipschitz, and $f_{i}(0,0) = 0$. Let the exogenous inputs $t\mapsto v(t)$ be Lebesgue measurable and essentially bounded; therefore, a solution to the switched system \eqref{issc_e:swsys} exists in the Carath$\acute{e}$odory sense \cite[Chapter 2]{Filippov} for some non-trivial time interval containing $0$.

    Given a family of systems \eqref{issc_e:family}, our focus is on identifying a class of switching signals $\sigma\in\mathcal{S}$ under which the switched system \eqref{issc_e:swsys} is ISS. Recall that
    \begin{definition}[{\cite[\S2]{chatterjee07}}]
    \label{issc_d:iss}
        The switched system \eqref{issc_e:swsys} is {input-to-state stable} (ISS) for a given $\sigma$ if there exist class $\Kinfty$ functions $\alpha,\chi$ and a class $\KL$ function $\beta$ such that for all inputs $v$ and initial states $x_{0}$, we have\footnote{$\mathcal{K} := \{\phi:[0,+\infty[\to[0,+\infty[\:\:\big|\:\:\phi\:\text{is continuous, strictly increasing,}\:\phi(0) = 0\}$, $\KL := \bigl\{\phi:[0,+\infty[^{2}\lra[0,+\infty[\:\:\big|\:\:\phi(\cdot,s)\in\K\:\:\text{for each}\:\:s\:\:\text{and}\:\:\phi(r,\cdot)\searrow 0\:\:\text{as}\:\:s\nearrow +\infty\:\:\text{for each}\:\:\:r\bigr\}$, $\displaystyle\Kinfty := \bigl\{\phi\in\K\:\:\big|\:\:\phi(r) \to +\infty\:\:\text{as}\:\:r\to+\infty\bigr\}$}
        \begin{align}
        \label{issc_e:iss}
            \alpha(\norm{x(t)}) \leq \beta(\norm{x_{0}},t) + \chi(\norm{v}_{[0,t]})\:\:\text{for all}\:\: t\geq 0.
        \end{align}
        If one can find $\alpha$, $\beta$ and $\chi$ such that \eqref{issc_e:iss} holds over a class $\mathcal{S}'$ of $\sigma$, then we say that \eqref{issc_e:swsys} is {uniformly ISS} over $\mathcal{S}'$.
    \end{definition}

     Note that if no input is present, i.e., $v\equiv 0$, then \eqref{issc_e:iss} reduces to GAS of \eqref{issc_e:swsys}. We next catalog certain properties of the family of systems \eqref{issc_e:family} and the switching signal $\sigma$. These properties will be required for our analysis towards deriving the class of stabilizing switching signals.
     \subsection{Properties of the family of systems}
     \label{issc_ss:familyprop}
     Let $\P_{S}$ and $\P_{U}\subset\P$ denote the sets of indices of ISS and non-ISS systems in the family \eqref{issc_e:family}, respectively, $\P = \P_{S}\sqcup\P_{U}$. Let $E(\P)$ be the set of all ordered pairs $(i,j)$ such that it is admissible to switch from system $i$ to system $j$, $i,j\in\P$.
        \begin{assumption}
        \label{issc_assumption:key}
            There exist class $\Kinfty$ functions $\underline{\alpha}$, $\overline{\alpha}$, $\gamma$, continuously differentiable functions $V_{i}:\R^{d}\lra[0,+\infty[$, $i\in\P$, and constants $\lambda_{i}\in\R$ with $\lambda_{i} > 0$ for $i\in\P_{S}$ and $\lambda_{i} < 0$ for $i\in\P_{U}$, such that for all $\xi\in\R^{d}$ and $\eta\in\R^{m}$, we have
            \begin{gather}
                \label{issc_e:Lyapineq}\underline{\alpha}(\norm{\xi}) \leq V_{i}(\xi) \leq \overline{\alpha}(\norm{\xi}),\\
                \label{issc_e:dLyapineq}\Biggl<\frac{\partial V_{i}}{\partial\xi}(\xi),f_{i}(\xi,\eta)\Biggr> \leq -\lambda_{i}V_{i}(\xi) + \gamma(\norm{\eta}).
            \end{gather}
        \end{assumption}
        \addtocounter{equation}{1}
        \begin{remark}
         Conditions \eqref{issc_e:Lyapineq} and \eqref{issc_e:dLyapineq} are equivalent to an ISS version of \cite[(7) and (18)]{Liberzon_IOSS}. The functions $V_{i}$'s are called the \emph{ISS-Lyapunov-like functions}, see \cite{Sontag95,Angeli99,Krichman01} for detailed discussion regarding the existence of such functions and their properties. In particular, condition \eqref{issc_e:dLyapineq} is equivalent to the ISS property for ISS subsystems \cite{Sontag95} and the unboundedness observability property for the non-ISS subsystems \cite{Krichman01}.
        \end{remark}

       % Similar to the case of Lyapunov-like functions in Chapter \ref{ch:gasct}, we stipulate:
        \begin{assumption}
        \label{issc_assumption:muijreln}
            For each pair $(i,j)\in E(\P)$ there exist $\mu_{ij} > 0$ such that the ISS-Lyapunov-like functions are related as follows:
            \begin{align}
            \label{issc_e:muijineq}
                V_{j}(\xi) \leq \mu_{ij}V_{i}(\xi)\:\:\text{for all}\:\:\xi\in\R^{d}.
            \end{align}
        \end{assumption}
        \begin{remark}
            The assumption of linearly comparable Lyapunov-like functions, i.e., there exists $\mu\geq 1$ such that
            \begin{align}
            \label{e:muineq}
                V_{j}(\xi) \leq \mu V_{i}(\xi)\:\:\text{for all}\:\:\xi\in\R^{d}\:\:\text{and}\:\:i,j\in\P
            \end{align}
            is standard in the theory of stability under average dwell time switching \cite[Theorem 3.2]{Liberzon}; \eqref{issc_e:muijineq} affords sharper estimates compared to \eqref{e:muineq}.
        \end{remark}

    \subsection{Properties of the switching signal}
        \label{ss:swsigprop}
            Fix $t > 0$. For a switching signal $\sigma$ we let $N_{\sigma}(0,t)$ denote the number of switches on the interval $]0,t]$, and $0 =: \tau_{0}<\tau_{1}<\cdots<\tau_{\Ntsigma}$ denote the corresponding switching instants before (and including) $t$.
        \begin{itemize}[label = $\circ$, leftmargin = *]
            \item We let
            \begin{align}
            \label{e:holdtime}
                S_{i+1} := \tau_{i+1} - \tau_{i},\quad i = 0,1,\cdots,
            \end{align}
            denote the \emph{i-th holding time} of $\sigma$.
        \item On an interval $]s,t]\subset[0,+\infty[$ of time, let
            \begin{align}
                \label{e:activetimeiss} \mathrm{T}^{\mathrm{S}}_{j}(s,t) := \displaystyle \stmeasure,\\
                \intertext{and}
                \label{e:activetimenoniss} \mathrm{T}^{\mathrm{U}}_{k}(s,t) := \abs{]s,t]\cap\biggl(\bigcup_{\displaystyle\substack{i=0\\\sigma(\tau_{i})=k}}^{+\infty}]\tau_{i},\tau_{i+1}]\biggr)}
            \end{align}
            denote the \emph{duration of activation} of a system $j\in\P_{S}$ and $k\in\P_{U}$, respectively. Clearly, $\displaystyle{\sum_{k\in\P_{U}}\mathrm{T}^{\mathrm{U}}_{k}(s,t) + \sum_{j\in\P_{S}}\mathrm{T}^{\mathrm{S}}_{j}(s,t)} = t-s\:\:\text{for all}\:\: 0\leq s<t<+\infty$.
        \item For a pair $(m,n)\in E(\P)$ let
            \begin{align}
            \addtocounter{equation}{1}
            \label{e:swnumber}
                \mathrm{N}_{mn}(s,t) := \#\{m\rightarrow n\}_{s}^{t}
            \end{align}
            be the \emph{number of switches} from system $m$ to system $n$ on the interval $]s,t]\subset[0,+\infty[$ of time. We have the immediate identity: $\Ntsigma = \sum_{(m,n)\in E(\P)}\mathrm{N}_{mn}(0,t)$, $t > 0$.
        \end{itemize}
        In the sequel we require the following class of functions:
        \begin{definition}
        \label{issct_d:funcdefinition}
            A function $\varrho:[0,+\infty[^{2}\rightarrow [0,+\infty[$ belongs to class $\mathcal{F}\mathcal{K}_{\infty}$ if
            \begin{itemize}[label = $\circ$, leftmargin = *]
                \item $\varrho$ is continuous, and
                \item for every fixed first argument, $\varrho$ is in class $\Kinfty$ in the second argument.
            \end{itemize}
        \end{definition}
        %We assume that
            \begin{assumption}
            \label{issc_assumption:swsigbounds}
                There exist class $\mathcal{F}\mathcal{K}_{\infty}$ functions $\rho^{\mathrm{S}}_j$, ${j\in \mathcal P_S}$, $\rho^{\mathrm{U}}_k$, ${k\in \mathcal P_U}$, $\rho_{mn}$, ${(m, n)\in E(\mathcal P)}$, and positive constants $\overline{\mathrm{T}}^{\mathrm{S}}_{j}$,${j\in\P_{S}}$, $\overline{\mathrm{T}}^{\mathrm{U}}_{k}$, ${k\in\P_{U}}$, $\overline{\mathrm{N}}_{mn}$, ${(m,n)\in E(\P)}$, such that on every interval $]s,t]\subset[0,+\infty[$ of time, the functions $\mathrm{T}^{\mathrm{S}}_{j}(s,t)$, ${j\in\P_{S}}$, $\mathrm{T}^{\mathrm{U}}_{k}(s,t)$, ${k\in\P_{U}}$, and $\mathrm{N}_{mn}(s,t)$, ${(m,n)\in E(\P)}$, defined in \eqref{e:activetimeiss}, \eqref{e:activetimenoniss}, and \eqref{e:swnumber}, respectively, satisfy the following inequalities:
        \begin{align}
            \label{issc_e:isslb} \mathrm{T}^{\mathrm{S}}_{j}(s,t) &\geq -\overline{\mathrm{T}}^{\mathrm{S}}_{j} + \rho^{\mathrm{S}}_{j}(s,t-s),\\
            \label{issc_e:nonissub} \mathrm{T}^{\mathrm{U}}_{k}(s,t) &\leq \overline{\mathrm{T}}^{\mathrm{U}}_{k} + \rho^{\mathrm{U}}_{k}(s,t-s),\\
            \label{issc_e:swub} \mathrm{N}_{mn}(s,t) &\leq \overline{\mathrm{N}}_{mn} + \rho_{mn}(s,t-s).
        \end{align}
    \end{assumption}
    \begin{remark}
    \label{a:kinffunctions}
        On every interval $]s,t]\subset[0,+\infty[$ of time, conditions \eqref{issc_e:isslb} and \eqref{issc_e:nonissub} constrain the duration of activation of a system $j\in\P_{S}$ and $k\in\P_{U}$, respectively, and condition \eqref{issc_e:swub} constrains the number of occurrences of an admissible switch $(m,n)\in E(\P)$. Each bound is provided in terms of a class $\mathcal{F}\mathcal{K}_{\infty}$ function (from $\rho^{\mathrm{S}}_{j}$, ${j\in\P_{S}}$, $\rho^{\mathrm{U}}_{k}$, ${k\in\P_{U}}$, $\rho_{mn}$, ${(m,n)\in E(\P)}$) and a positive offset (from $\overline{\mathrm{T}}^{\mathrm{S}}_{j}$,${j\in\P_{S}}$, $\overline{\mathrm{T}}^{\mathrm{U}}_{k}$, ${k\in\P_{U}}$, $\overline{\mathrm{N}}_{mn}$, ${(m,n)\in E(\P)}$). We consider point-wise lower bounds on the duration of activation of ISS subsystems, and upper bounds on the duration of activation of non-ISS subsystems and the number of occurrences of admissible switches on every interval $]s,t]\subset[0,+\infty[$ of time. In the analysis towards identifying switching signals for ISS of switched systems, such bounds are standard assumptions. For example, in \cite{Liberzon_IOSS} and \cite{chatterjee07} the number of switches on every interval of time is allowed to grow at most as an affine function of the length of the interval. In the presence of non-ISS subsystems, the duration of activation of such systems is also constrained on every interval of time in \cite{Liberzon_IOSS}. We use the class $\mathcal{F}\mathcal{K}_{\infty}$ functions $\rho^{\mathrm{S}}_{j}(s,t-s)$, ${j\in\P_{S}}$, $\rho^{\mathrm{U}}_{k}(s,t-s)$, ${k\in\P_{U}}$, and $\rho_{mn}(s,t-s)$, ${(m,n)\in E(\P)}$ with two arguments --- the initial value of the interval $s\in[0,+\infty[$ and the length of the interval $(t-s)$, with the objective to allow the number of switches on any interval of time to grow faster than the case of average dwell time switching as we shall see momentarily.
    \end{remark}

%%%%%%%%%%%%%%%%%%%%%%%%%%%%%%%%%%%%%%%%%%%%%%%%%%%%%%%%%%%%%%%%%%%%%%%%%%%%%%%%%%%%%%%%%%%%%%%%%%%%%%%%%%%%%%%%%%%%%%%%%%%%%%%%%%%%

\section{Main Results}
\label{s:mainres}
     We are now in a position to present our main results.
    \begin{theorem}
    \label{issc_t:mainres1}
        Consider the family of systems \eqref{issc_e:family}. Let $\P_{S}, \P_{U}\subset \P$ and $E(\P)$ be as described in \S\ref{issc_ss:familyprop}. Suppose that Assumptions \ref{issc_assumption:key} and \ref{issc_assumption:muijreln} hold. Let there exist constants $c_{1}$ and $c_{2}$, and a class $\mathcal{F}\mathcal{K}_{\infty}$ function $\rho:[0,+\infty[^{2}\to[0,+\infty[$ satisfying $\rho(0,0) = 0$ such that the following conditions hold:
        %\begin{enumerate}%[label = (4.10-\alph*), leftmargin = *]
%        %\begin{align}
%           \label{issc_e:maincondn1}$\begin{aligned}&-\sum_{j\in\P_{S}}\abs{\lambda_{j}}\rho^{\mathrm{S}}_{j}(r,s)+\sum_{k\in\P_{U}}\abs{\lambda_{k}}\rho^{\mathrm{U}}_{k}(r,s)+\sum_{(m,n)\in E(\P)}{(\ln\mu_{mn})}\rho_{mn}(r,s) \leq c_{1}-\rho(r,s)\\&\text{for every interval $]r,r+s]\subset[0,+\infty[$ of time, and}$\\
%           \label{issc_e:maincondn2}$\begin{aligned}&\lim_{t\rightarrow+\infty}&\sum_{i=0}^{\Ntsigma}\exp\bigl(-\rho(\tau_{i},t-\tau_{i})\bigr) \leq c_{2}.\end{aligned}$
%        %\end{align}
%        \end{enumerate}
        \begin{align}
            \label{issc_e:maincondn1}&-\sum_{j\in\P_{S}}\abs{\lambda_{j}}\rho^{\mathrm{S}}_{j}(r,s)+\sum_{k\in\P_{U}}\abs{\lambda_{k}}\rho^{\mathrm{U}}_{k}(r,s)+\sum_{(m,n)\in E(\P)}{(\ln\mu_{mn})}\rho_{mn}(r,s) \leq c_{1}-\rho(r,s)\\
            &\text{for every interval $]r,r+s]\subset[0,+\infty[$ of time, and}\nonumber\\
            \label{issc_e:maincondn2}&\lim_{t\rightarrow+\infty}\sum_{i=0}^{\Ntsigma}\exp\bigl(-\rho(\tau_{i},t-\tau_{i})\bigr) \leq c_{2}.
        \end{align}
        Here $\lambda_{j}$, ${j\in\P_{S}}$, $\lambda_{k}$, ${k\in\P_{U}}$, and $\mu_{mn}$, ${(m,n)\in E(\P)}$ are as in \eqref{issc_e:dLyapineq} and \eqref{issc_e:muijineq}, respectively, and class $\mathcal{F}\mathcal{K}_{\infty}$ functions $\rho^{\mathrm{S}}_{j}$, ${j\in\P_{S}}$, $\rho^{\mathrm{U}}_{k}$, ${k\in\P_{U}}$, $\rho_{mn}$, ${(m,n)\in E(\P)}$ are as in Assumption \ref{issc_assumption:swsigbounds}. Then the switched system \eqref{issc_e:swsys} is uniformly input-to-state stable (ISS) for every $\sigma\in\mathcal{S}$ satisfying \eqref{issc_e:isslb}, \eqref{issc_e:nonissub}, and \eqref{issc_e:swub} for every interval $]r,r+s]\subset[0,+\infty[$ of time.
    \end{theorem}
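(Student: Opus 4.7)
The plan is to run a standard multiple-Lyapunov-function analysis along the switching sequence and then match the resulting bound against hypotheses \eqref{issc_e:maincondn1}--\eqref{issc_e:maincondn2}. First, on each dwell interval $]\tau_i,\tau_{i+1}]$ the active subsystem is $\sigma(\tau_i)$, so \eqref{issc_e:dLyapineq} together with the Gr\"onwall/comparison lemma gives
\begin{align*}
V_{\sigma(\tau_i)}(x(\tau_{i+1})) \leq e^{-\lambda_{\sigma(\tau_i)}(\tau_{i+1}-\tau_i)} V_{\sigma(\tau_i)}(x(\tau_i)) + \int_{\tau_i}^{\tau_{i+1}} e^{-\lambda_{\sigma(\tau_i)}(\tau_{i+1}-s)} \gamma(\norm{v(s)})\, ds.
\end{align*}
At each switching instant $\tau_{i+1}$ I apply the jump inequality \eqref{issc_e:muijineq} to pass from $V_{\sigma(\tau_i)}$ to $V_{\sigma(\tau_{i+1})}$, picking up the factor $\mu_{\sigma(\tau_i)\sigma(\tau_{i+1})}$. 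Iterating these two steps from $\tau_0=0$ to $t$ (with $\tau_{\Ntsigma+1}:=t$) yields a bound of the form $V_{\sigma(t)}(x(t)) \leq \Phi(0,t)\, V_{\sigma(0)}(x_0) + \sum_{i=0}^{\Ntsigma} \Phi(\tau_i,t)\,I_i(v)$, where $\Phi(r,t)$ collects all decay exponentials and jump multipliers generated on $]r,t]$ and $I_i(v)$ is the input integral above.

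Second, taking logarithms and regrouping subsystems by index, I identify
\[
\ln\Phi(r,t) = -\sum_{j\in\P_{S}}\abs{\lambda_{j}}\mathrm{T}^{\mathrm{S}}_{j}(r,t) + \sum_{k\in\P_{U}}\abs{\lambda_{k}}\mathrm{T}^{\mathrm{U}}_{k}(r,t) + \sum_{(m,n)\in E(\P)} (\ln\mu_{mn})\mathrm{N}_{mn}(r,t).
\]
I then feed in the three pointwise bounds \eqref{issc_e:isslb}--\eqref{issc_e:swub} on the interval $]r,t]$ (handling any $(m,n)$ with $\ln\mu_{mn}<0$ by keeping its contribution on the favourable side, and otherwise using the upper bound on $\mathrm{N}_{mn}$), which converts the inequality into the left-hand side of \eqref{issc_e:maincondn1} plus a bounded offset $C$ built from $\abs{\lambda_j}\overline{\mathrm{T}}^{\mathrm{S}}_j$, $\abs{\lambda_k}\overline{\mathrm{T}}^{\mathrm{U}}_k$ and $(\ln\mu_{mn})\overline{\mathrm{N}}_{mn}$. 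Applying \eqref{issc_e:maincondn1} delivers $\Phi(r,t) \leq e^{C+c_{1}}\,e^{-\rho(r,t-r)}$ uniformly in $\sigma\in\mathcal{S}$ satisfying the hypotheses.

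Third, for the initial-state term I use $\Phi(0,t) \leq e^{C+c_{1}} e^{-\rho(0,t)}$; since $\rho(0,\cdot)\in\Kinfty$, the product $e^{-\rho(0,t)}\,\overline{\alpha}(\norm{x_{0}})$ defines a class $\KL$ function of $(\norm{x_0},t)$. For the input terms, I bound each integral $I_i(v)$ crudely by $\gamma(\norm{v}_{[0,t]})$ times $\int_{\tau_i}^{\tau_{i+1}} e^{-\lambda_{\sigma(\tau_i)}(\tau_{i+1}-s)}\,ds$, and absorb that residual exponential into the estimate on $\Phi(\tau_i,t)$ at the price of enlarging $C$. Summing over $i$ and invoking the summability hypothesis \eqref{issc_e:maincondn2} controls the total input contribution by a constant multiple of $\gamma(\norm{v}_{[0,t]})$. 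Combining with $\underline{\alpha}(\norm{x(t)}) \leq V_{\sigma(t)}(x(t))$ and $V_{\sigma(0)}(x_0)\leq\overline{\alpha}(\norm{x_0})$ from \eqref{issc_e:Lyapineq} gives the ISS estimate \eqref{issc_e:iss} with $\alpha=\underline{\alpha}$, a $\KL$ function $\beta$ proportional to $e^{-\rho(0,t)}\overline{\alpha}$, and a $\Kinfty$ gain $\chi$ proportional to $\gamma$.

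The main obstacle I anticipate is the fifth step above: marrying the exponential factor that lives inside $I_i(v)$ with the iterated coefficient $\Phi(\tau_i,t)$ so that the combined weight on each $\gamma(\norm{v(s)})$ is bounded by $\exp(-\rho(\tau_i,t-\tau_i))$ up to a universal multiplicative constant, since only then does \eqref{issc_e:maincondn2} make the total input contribution finite uniformly in $t$. A minor secondary subtlety is treating the sign of $\ln\mu_{mn}$ carefully when transitioning from $\sum (\ln\mu_{mn})\mathrm{N}_{mn}(r,t)$ to the bound involving $\rho_{mn}$, but this is a routine case split.
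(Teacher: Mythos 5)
Your proposal is correct and follows essentially the same route as the paper's proof: the same Gr\"onwall-plus-jump iteration along the switching instants, the same regrouping of the exponent into $\mathrm{T}^{\mathrm{S}}_{j}$, $\mathrm{T}^{\mathrm{U}}_{k}$ and $\mathrm{N}_{mn}$, and the same use of \eqref{issc_e:maincondn1} and \eqref{issc_e:maincondn2} to control the initial-state and input contributions. The obstacle you flag in your last paragraph is resolved in the paper exactly as you anticipate: for $\sigma(\tau_{i})=k\in\P_{U}$ the residual factor $\abs{\lambda_{k}}^{-1}\bigl(\exp(\abs{\lambda_{k}}S_{i+1})-1\bigr)$ is absorbed by replacing $\mathrm{T}^{\mathrm{U}}_{k}(\tau_{i+1},t)$ with $\mathrm{T}^{\mathrm{U}}_{k}(\tau_{i},t)$ in the exponent (the extra growth over $]\tau_{i},\tau_{i+1}]$ is precisely activation of system $k$), so the weight on the $i$-th input term is bounded by $\exp\bigl(c+c_{1}-\rho(\tau_{i},t-\tau_{i})\bigr)$ and \eqref{issc_e:maincondn2} applies directly.
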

        See \S\ref{s:proofs} for a detailed proof of the above theorem.
    \begin{remark}
    \label{r:reg3.1-a}
        The condition \eqref{issc_e:maincondn1} provides a point-wise upper bound on the difference between the weighted class $\mathcal{F}\mathcal{K}_{\infty}$ functions
       \[
        (r,s)\mapsto\displaystyle\sum_{k\in\P_{U}}\abs{\lambda_{k}}\rho^{\mathrm{U}}_{k}(r,s) + \sum_{(m,n)\in E(\P)}(\ln\mu_{mn})\rho_{mn}(r,s),\:\:\text{and}
        \]
        \[
            (r,s)\mapsto\displaystyle\sum_{j\in\P_{S}}\abs{\lambda_{j}}\rho^{\mathrm{S}}_{j}(r,s)
        \]
       in terms of a constant $c_{1}$ and another class $\mathcal{F}\mathcal{K}_{\infty}$ function $\rho$ satisfying $\rho(0,0) = 0$, where the class $\mathcal{F}\mathcal{K}_{\infty}$ functions $\rho^{\mathrm{S}}_{j}$, ${j\in\P_{S}}$, $\rho^{\mathrm{U}}_{k}$, ${k\in\P_{U}}$, and $\rho_{mn}$, ${(m,n)\in E(\P)}$ constrain the duration of activation of ISS subsystems and non-ISS subsystems, and the number of occurrences of the admissible switches, respectively on every interval $]r,r+s]\subset[0,+\infty[$ of time.
    \end{remark}

    \begin{remark}
         The condition \eqref{issc_e:maincondn2} deals with summability of a series with non-negative terms involving the class $\mathcal{F}\mathcal{K}_{\infty}$ function $\rho$ satisfying $\rho(0,0) = 0$, the number of switches $\Ntsigma$ before (and including) $t > 0$, and the corresponding switching instants $0 =: \tau_{0} < \tau_{1} < \cdots < \tau_{\Ntsigma}$.
    \end{remark}
    \begin{remark}
        The constants $c_{1}$ and $c_{2}$ on the right-hand sides of \eqref{issc_e:maincondn1} and \eqref{issc_e:maincondn2} ensure \emph{uniform} ISS of the switched system \eqref{issc_e:swsys} over all switching signals $\sigma$ satisfying \eqref{issc_e:isslb}, \eqref{issc_e:nonissub}, \eqref{issc_e:swub}, \eqref{issc_e:maincondn1} and \eqref{issc_e:maincondn2}.
     \end{remark}

     \begin{remark}
     \label{issct_r:adtcompa}
     %\label{issc_r:worstcasesw}
        Our class of stabilizing switching signals goes beyond the average dwell time regime in the sense that on every interval of time the number of switches is allowed to grow faster than an affine function of the length of the interval. We elaborate on this feature with the aid of the following example:\\
        Fix $t > 0$. Let us study how close to $t$ can the $\tau_{i}$'s be placed under condition \eqref{issc_e:swub}. We have $\Ntsigma \leq \mathrm{N}_{0} + \lfloor\rho_{\mathrm{N}(0,t)}\rfloor$,
        where $\displaystyle{\mathrm{N}_{0} := \sum_{(m,n)\in E(\P)}\overline{\mathrm{N}}_{mn}}$ and $\displaystyle{\rho_{\mathrm{N}}(0,t) := \sum_{(m,n)\in E(\P)}\rho_{mn}(0,t)}$.
        Consequently, $\displaystyle{\sum_{i=0}^{\Ntsigma}\exp\bigl(-\rho(\tau_{i},t-\tau_{i})\bigr)}$ is at most equal to $\displaystyle{\sum_{i=0}^{\mathrm{N}_{0}+\lfloor\rho_{\mathrm{N}}(0,t)\rfloor}\exp\bigl(-\rho(\tau_{i},t-\tau_{i})\bigr)}$.
        However small a time interval may be, at most $\mathrm{N}_{0}$ switches are allowed. So these many switches can be placed arbitrarily close to $t$. For the rest of the $\lfloor\rho_{\mathrm{N}}(0,t)\rfloor = n$ (say) switches that can be placed on $]0,t]$, we have
        \begin{itemize}[label = $\circ$, leftmargin = *]
            \item on the interval $]\tau_{n},t]$ at most $\mathrm{N}_{0} + 1$ switches are allowed,
            \item on the interval $]\tau_{n-1},t]$ at most $\mathrm{N}_{0} + 2$ switches are allowed,
            \item $\cdots$
        \end{itemize}
        Joint validity of the above conditions leads to
        \begin{align*}
            \tau_{n} &= t - \inf\{r > 0\:|\:\rho_{\mathrm{N}}(r,s) > 1\:\:\text{with}\:\:s = t-r\},\\
            \tau_{n-1} &= t - \inf\{r > 0\:|\:\rho_{\mathrm{N}}(r,s) > 2\:\:\text{with}\:\:s = t-r\},\\
            \cdots\\
            \tau_{2} &= t - \inf\{r > 0\:|\:\rho_{\mathrm{N}}(r,s) > (n-1)\:\:\text{with}\:\:s = t-r\},\\
            \tau_{1} &= t - \inf\{r > 0\:|\:\rho_{\mathrm{N}}(r,s) > n\:\:\text{with}\:\:s = t-r\},
        \end{align*}
        i.e.,
        \begin{align*}
            \tau_{n} &= t-\rho_{\mathrm{N}}^{-1}(\cdot,t-\cdot)(1),\\
            \tau_{n-1} &= t-\rho_{\mathrm{N}}^{-1}(\cdot,t-\cdot)(2),\\
            \cdots\\
            \tau_{2} &= t-\rho_{\mathrm{N}}^{-1}(\cdot,t-\cdot)(n-1),\\
            \tau_{1} &= t-\rho_{\mathrm{N}}^{-1}(\cdot,t-\cdot)(n).
        \end{align*}
        Now let us study the above phenomenon under average dwell time switching. Recall that \cite[p.\ 58]{Liberzon} a switching signal $\sigma$ has average dwell time $\tau_{a}$ if there exist two positive numbers $\mathrm{N}_{0}$ and $\tau_{a}$ such that $\mathrm{N}_{\sigma}(s,t) \leq \mathrm{N}_{0} + \frac{t-s}{\tau_{a}}$ for all $0\leq s < t$. Let the $\mathrm{N}_{0}$ switches be placed arbitrarily close to $t$ as already explained. As regard to the remaining $\lfloor\frac{t}{\tau_{a}}\rfloor$ switches,
        \begin{itemize}[label = $\circ$, leftmargin = *]
            \item on every interval of length $t - (t-n\tau_{a})$, at most $\mathrm{N}_{0}+n$ switches are allowed,
            \item on every interval of length $(t-(n-1)\tau_{a}) - (t-\tau_{a})$, at most $\mathrm{N}_{0}+1$ switches are allowed,
            \item $\cdots$
        \end{itemize}
        Consequently, we have
        \begin{align*}
            \tau_{n} &= t - \tau_{a},\\
            \tau_{n-1} &= t - 2\tau_{a},\\
            \cdots\\
            \tau_{2} &= t - (n-1)\tau_{a},\\
            \tau_{1} &= t - n\tau_{a}.
        \end{align*}
        As is evident from the above discussion, our class of switching signals allows number of switches on every interval of time to grow faster than an affine function of the length of the interval.
    \end{remark}

    We next consider two simple cases where both the functions $\rho$ and $\rho_{N}$ are such that for all $r_{1},r_{2} \geq 0$ and all $s > 0$
    \begin{align*}
        \rho(r_{1},s) = \rho(r_{2},s),\:\:\text{and}\:\:\rho_{\mathrm{N}}(r_{1},s) = \rho_{\mathrm{N}}(r_{2},s),
    \end{align*}
    and discuss boundedness of the quantity $\displaystyle\sum_{i=0}^{\Ntsigma}\exp\bigl(-\rho(\tau_{i},t-\tau_{i})\bigr)$ with $t$.
    \begin{lemma}
    \label{issc_lem:rhoaffine}
        Let
        \begin{align}
        \label{issc_e:rhoaffine}
            \rho(r,s) = k_{1}s + k_{2}\quad\text{for some}\:\:k_{1}, k_{2} > 0,\:\:s \geq 0,
        \end{align}
        and let $\rho_{N}$ be such that the switches be equispaced in time, i.e., they satisfy
        \begin{align*}
            \tau_{n} &= t - \rho_{\mathrm{N}}^{-1}(\cdot,t-\cdot)(1)\\
            \tau_{n-1} &= t - 2\rho_{\mathrm{N}}^{-1}(\cdot,t-\cdot)(1)\\
            \cdots\\
            \tau_{2} &= t - (n-1)\rho_{\mathrm{N}}^{-1}(\cdot,t-\cdot)(1)\\
            \tau_{1} &= t - n\rho_{\mathrm{N}}^{-1}(\cdot,t-\cdot)(1).
         \end{align*}
        Then $\displaystyle{\lim_{t\rightarrow+\infty}\sum_{i=0}^{\Ntsigma}\exp\bigl(-\rho(\tau_{i},t-\tau_{i})\bigr) < +\infty}$.
    \end{lemma}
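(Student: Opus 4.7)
The strategy is to bound the sum in \eqref{issc_e:maincondn2} by a convergent geometric series, uniformly in $t$, and then let $t \to +\infty$.

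The first step is to decode the notation. Under the lemma's hypothesis that $\rho_{\mathrm N}(r,s)$ is independent of $r$, the function $\rho_{\mathrm N}$ reduces to a single class $\Kinfty$ function of $s$; hence the quantity $\delta := \rho_{\mathrm N}^{-1}(\cdot,t-\cdot)(1)$ is the unique positive number satisfying $\rho_{\mathrm N}(\cdot,\delta) = 1$, and in particular does not depend on $t$. Combined with the equispacing hypothesis, this yields the clean identity $t - \tau_{i} = (\Ntsigma - i + 1)\,\delta$ for $i = 1, \ldots, \Ntsigma$, and $t - \tau_{0} = t > \Ntsigma\,\delta$.

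The second step is the substitution $\rho(r,s) = k_{1}s + k_{2}$, which turns each summand into $e^{-k_{2}}\,e^{-k_{1}(t - \tau_{i})}$. For the indices $i = 1, \ldots, \Ntsigma$, the identity from the previous step and the reindexing $j = \Ntsigma - i + 1$ expose a truncated geometric series with ratio $e^{-k_{1}\delta} \in (0,1)$, bounded uniformly in $\Ntsigma$ (hence in $t$) by the infinite geometric series
\[
e^{-k_{2}}\sum_{j=1}^{\infty}e^{-k_{1}\delta\,j} = \frac{e^{-k_{2}-k_{1}\delta}}{1-e^{-k_{1}\delta}},
\]
while the single $i=0$ term is at most $e^{-k_{2}}$. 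Adding these two $t$-independent majorants produces a finite bound for the whole sum that survives the limit $t \to +\infty$.

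There is no substantive obstacle here. The only point that demands care is the correct reading of $\rho_{\mathrm N}^{-1}(\cdot,t-\cdot)(1)$ under the simplifying hypothesis, which makes it a $t$-independent positive constant; after that, the argument is a routine comparison with a convergent geometric series.
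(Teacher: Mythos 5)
Your proof is correct and follows essentially the same route as the paper's: substitute the affine form of $\rho$, use the equispacing to turn the sum into a truncated geometric series with ratio $e^{-k_{1}\delta}$, and bound it uniformly in $t$ by the corresponding infinite geometric series plus the $i=0$ term. The only cosmetic difference is that the paper's proof additionally budgets for the $\mathrm{N}_{0}$ offset switches that condition \eqref{issc_e:swub} permits arbitrarily close to $t$ (each contributing at most $e^{-k_{2}}$, hence an extra $t$-independent additive constant), whereas you account only for the literally stated equispaced instants; either bookkeeping yields the same conclusion.
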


    \begin{lemma}
    \label{issc_lem:rhothreehalves}
        Let
        \begin{align}
        \label{issc_e:rhothreehalves}
            \rho(r,s) = k_{1}s^{3/2} + k_{2}\quad\text{for some}\:\:k_{1}, k_{2} > 0,\:\:s \geq 0,
        \end{align}
        and let $\rho_{N}$ be such that the switches be equispaced in time, i.e., they satisfy
        \begin{align*}
            \tau_{n} &= t - \rho_{\mathrm{N}}^{-1}(\cdot,t-\cdot)(1)\\
            \tau_{n-1} &= t - 2\rho_{\mathrm{N}}^{-1}(\cdot,t-\cdot)(1)\\
            \cdots\\
            \tau_{2} &= t - (n-1)\rho_{\mathrm{N}}^{-1}(\cdot,t-\cdot)(1)\\
            \tau_{1} &= t - n\rho_{\mathrm{N}}^{-1}(\cdot,t-\cdot)(1).
         \end{align*}
        Then $\displaystyle{\lim_{t\rightarrow+\infty}\sum_{i=0}^{\Ntsigma}\exp\bigl(-\rho(\tau_{i},t-\tau_{i})\bigr) < +\infty}$.
    \end{lemma}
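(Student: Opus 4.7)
The plan is to use the equispacing structure to reduce the sum to a one-parameter series in the spacing $h \coloneqq \rho_{\mathrm{N}}^{-1}(\cdot, t-\cdot)(1)$, then to appeal to summability of $\sum_{j} \exp\bigl(-k_{1}(jh)^{3/2}\bigr)$.

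First I would substitute the equispacing identities $t - \tau_{i} = (n - i + 1) h$ for $i = 1, \ldots, n$ (together with $t - \tau_{0} = t$) into the sum and relabel $j = n - i + 1$. Using $\rho(r,s) = k_{1} s^{3/2} + k_{2}$, this gives
\[
\sum_{i=0}^{\Ntsigma} \exp\bigl(-\rho(\tau_{i}, t-\tau_{i})\bigr) \;=\; e^{-k_{2}}\exp(-k_{1} t^{3/2}) \;+\; e^{-k_{2}}\sum_{j=1}^{n}\exp\bigl(-k_{1} (jh)^{3/2}\bigr).
\]
As $t \to \infty$, the first term vanishes, so it suffices to bound the residual sum uniformly in $n$ (equivalently, in $t$).

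For that, since $j \mapsto \exp(-k_{1} (jh)^{3/2})$ is strictly decreasing in $j \geq 1$, the integral test gives
\[
\sum_{j=1}^{\infty}\exp\bigl(-k_{1} (jh)^{3/2}\bigr) \;\leq\; 1 + \int_{0}^{\infty}\exp\bigl(-k_{1} (xh)^{3/2}\bigr)\, dx \;=\; 1 + \frac{1}{h}\int_{0}^{\infty}\exp(-k_{1} u^{3/2})\, du,
\]
via the substitution $u = xh$. A further substitution $w = k_{1} u^{3/2}$ reduces the remaining integral to a constant multiple of $\Gamma(2/3)$ and is therefore finite. In particular, the series $\sum_{j\geq 1}\exp(-k_{1}(jh)^{3/2})$ is summable, which is the 3/2-analogue of the geometric summation used in Lemma~\ref{issc_lem:rhoaffine}.

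The main obstacle is implicit in the hypothesis: the bound above is useful only when $h$ stays uniformly bounded below by a positive constant as $t \to \infty$. This is parallel to the affine case of Lemma~\ref{issc_lem:rhoaffine}, where the same structural requirement on $\rho_{\mathrm{N}}$ ensures the inter-switch spacing does not collapse; once this is granted, combining the two pieces yields $\lim_{t\to\infty}\sum_{i=0}^{\Ntsigma}\exp(-\rho(\tau_{i}, t-\tau_{i})) < +\infty$, as claimed. Aside from this structural subtlety, the proof differs from Lemma~\ref{issc_lem:rhoaffine} only in that the geometric tail there is replaced here by a Weibull-type tail, whose summability is robust to the change in exponent from $1$ to $3/2$.
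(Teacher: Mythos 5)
Your proposal is correct and follows essentially the same route as the paper: reduce the sum via the equispacing hypothesis to $\sum_{j}\exp\bigl(-k_{1}(jh)^{3/2}\bigr)$ with $h = \rho_{\mathrm{N}}^{-1}(\cdot,t-\cdot)(1)$, then bound it by the integral test with a power substitution yielding a $\Gamma(2/3)$ constant. The only cosmetic differences are that the paper also carries along the $\mathrm{N}_{0}$ switches permitted arbitrarily close to $t$ (contributing a bounded additive constant) and treats $h$ as a fixed positive constant without comment, whereas you flag that lower bound explicitly --- a fair observation, but not a divergence in method.
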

    The proofs of Lemmas \ref{issc_lem:rhoaffine} and \ref{issc_lem:rhothreehalves} are presented in \S\ref{s:proofs}.

    %\begin{remark}
%        Theorem \ref{issc_t:mainres1} extends to the case of discrete-time switched nonlinear systems under minor modification of condition \eqref{issc_e:maincondn1}. This feature is studied in \cite{kmnc_adhs15}.
%    \end{remark}

%%%%%%%%%%%%%%%%%%%%%%%%%%%%%%%%%%%%%%%%%%%%%%%%%%%%%%%%%%%%%%%%%%%%%%%%%%%%%%%%%%%%%%%%%%%%%%%%%%%%%%%%%%%%%%%%%

\section{Numerical Example}
\label{s:ex}
    We consider $\P = \{1,2\}$ with
    \begin{align*}
        f_{1}(x,v) &= \pmat{-x_{1}+\sin(x_{1}-x_{2})\\-x_{2}+0.8\sin(x_{2}-x_{1})+0.5v},\\
        \intertext{and}
        f_{2}(x,v) &= \pmat{x_{1}+\sin(x_{1}-x_{2})\\x_{2}+\sin(x_{2}-x_{1})+0.5v}.
    \end{align*}
    Consequently, $\P_{S} = \{1\}$ and $\P_{U} = \{2\}$. Let $v\equiv 1$. With the choice $V_{1}(x) = 0.5(x_{1}^2 + 1.25x_{2}^2)$, $V_{2}(x) = 0.5(x_{1}^{2}+x_{2}^{2})$, we have $\lambda_{1} = 1.75$, $\lambda_{2} = -2.1667$, $\mu_{12} = 1$, and $\mu_{21} = 2$, see \cite[\S5]{Liberzon_IOSS} for a detailed discussion.

    Let $\rho(r,s) = k_{1}s^{3/2}+k_{2}$ with $k_{1},k_{2} > 0$. We have already shown in Lemma \ref{issc_lem:rhothreehalves} that with the above structure on $\rho$, the term $\displaystyle{\lim_{t\to +\infty}\sum_{i=0}^{\Ntsigma}\exp\Bigl(-\rho(\tau_{i},t-\tau_{i})\Bigr)} < +\infty$.

    Let a switching signal $\sigma$ satisfy
    \begin{enumerate}[leftmargin = *]
        \item $\rho^{\mathrm{S}}_{1}(r,s) = 0.2030s+0.0001s^{3/2}$, $\rho^{\mathrm{U}}_{2}(r,s) = 0.1s$,
        \item $\rho_{12}(r,s) = 0.1s+0.05s^{3/2}$, $\rho_{21}(r,s) = 0.2s+0.0025s^{3/2}$
    \end{enumerate}
    in addition to satisfying \eqref{issc_e:isslb}, \eqref{issc_e:nonissub}, and \eqref{issc_e:swub} for every interval $]r,r+s]\subset[0,+\infty[$ of time.

    We verify that
    \begin{align*}
        -\abs{\lambda_{1}}\rho^{\mathrm{S}}_{1}(r,s)&+\abs{\lambda_{2}}\rho^{\mathrm{U}}_{2}(r,s)+(\ln\mu_{12})\rho_{12}(r,s)\\
        &+(\ln\mu_{21})\rho_{21}(r,s) = -1.725\times 10^{-5}s^{3/2},
    \end{align*}
    which satisfies \eqref{issc_e:maincondn1} with $k_{1} = 1\times 10^{-5}$ and $c_{1} = 0$.

%    Fix a time horizon $t=50$. We study ISS of switched system \eqref{issc_e:swsys} under the worst case switching explained in Chapter \ref{ch:issct}. We have
%    \begin{align*}
%        \rho_{\mathrm{N}}(r,s) &:= \rho_{12}(r,s) + \rho_{21}(r,s)\\
%        &= 0.3030s+0.0525s^{3/2}.
%    \end{align*}
%    Consequently, $\rho_{\mathrm{N}}(r,s)^{-1}(1) \approx 2.58$.
Let $\overline{\mathrm{T}}^{\mathrm{S}}_{1} = 0.01$, $\overline{\mathrm{T}}^{\mathrm{U}}_{2} = 2.58$, $\overline{\mathrm{N}}_{12} = \overline{\mathrm{N}}_{21} = 1$. An execution of the switching signal $\sigma$ described above is illustrated in Figure \ref{fig:swplot2}.
    \begin{figure}[htbp]
        \begin{center}
            \includegraphics[height = 6cm, width = 7cm]{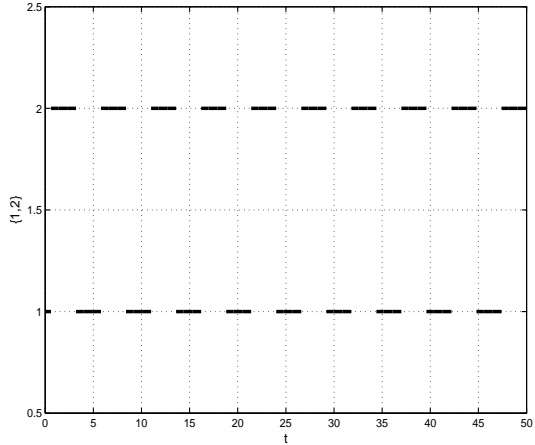}
            \caption{The switching signal for Figure \ref{fig:xtplot2}.} \label{fig:swplot2}
        \end{center}
        \end{figure}
    We study the process $(\norm{x(t)})_{t\geq 0}$ corresponding to fifty different initial conditions $x(0)$ selected uniformly at random from the interval $[-1000,1000]^{2}$ and the switching signal demonstrated in Figure \ref{fig:swplot2}. This is shown in Figure \ref{fig:xtplot2}.
    \begin{figure}[htbp]
        \begin{center}
           \includegraphics[height = 6cm, width = 7cm]{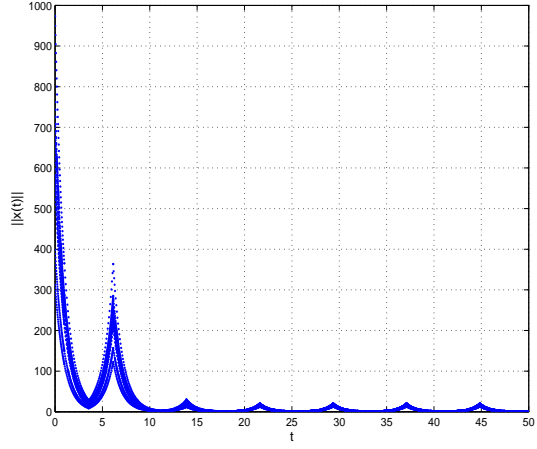}
            \caption{Plot of $\norm{x(t)}$ against $t$, with $x(0)$ selected uniformly at random from $[-1000,1000]^{2}$.}
            \label{fig:xtplot2}
        \end{center}
        \end{figure}

%%%%%%%%%%%%%%%%%%%%%%%%%%%%%%%%%%%%%%%%%%%%%%%%%%%%%%%%%%%%%%%%%%%%%%%%%%%%%%%%%%%%%%%%%%%%%%%%%%%%%%%%%%%
\section{Discussion}
\label{s:discussn}
    In this section we recast a subclass of average dwell time switching signals in our setting and establish analogs of the prior results: an ISS version of \cite[Theorem 2]{Liberzon_IOSS}, and \cite[Theorem 3.1]{chatterjee07}, with the aid of our main result. Our first result of this section is:
    \begin{proposition}
    \label{issc_cor:IOSS2012}
        Consider the family of systems \eqref{issc_e:family}. Suppose that Assumption \ref{issc_assumption:key} holds with $\abs{\lambda_{j}} = \lambda_{S}$ for all $j\in\P_{S}$ and $\abs{\lambda_{k}} = \lambda_{U}$ for all $k\in\P_{U}$, and Assumption \ref{issc_assumption:muijreln} holds with $\mu_{mn} = \mu$ for all $(m,n)\in E(\P)$. Let $\overline{\rho}$ and $\tau_{a}$ be constants satisfying $\displaystyle{\overline{\rho} \in\: \biggr]0,\frac{\lambda_{S}}{\lambda_{S}+\lambda_{U}}}\biggl[$ and
        %\addtocounter{equation}{1}
        \begin{align}
        \label{issc_e:adtmodified5}
            \tau_{a} \in\: \biggr]\frac{\ln\mu}{\lambda_{S}\cdot(1-\overline{\rho})-\lambda_{U}\cdot\overline{\rho}},+\infty\biggl[.
        \end{align}
        Let the class $\mathcal{F}\mathcal{K}_{\infty}$ functions $\rho^{\mathrm{S}}_{j}$, ${j\in\P_{S}}$, $\rho^{\mathrm{U}}_{k}$, ${k\in\P_{U}}$, $\rho_{mn}$, ${(m,n)\in E(\P)}$ described in Assumption \ref{issc_assumption:swsigbounds}, be such that for all $r_{1},r_{2}\geq 0$ and all $s > 0$
        \begin{align}
            \label{e:as1}\rho^{\mathrm{S}}_{j}(r_{1},s) &= \rho^{\mathrm{S}}_{j}(r_{2},s),\\
            \label{e:as2}\rho^{\mathrm{U}}_{k}(r_{1},s) &= \rho^{\mathrm{U}}_{k}(r_{2},s),
            \intertext{and}
            \label{e:as3}\rho_{mn}(r_{1},s) &= \rho_{mn}(r_{2},s).
        \end{align}
        Moreover, let for every interval $]r,r+s]\subset[0,+\infty[$ of time
        \begin{align}
            \label{issc_e:adtmodified3p}\sum_{j\in\P_{S}}\rho^{\mathrm{S}}_{j}(r,s) +  \sum_{k\in\P_{U}}\rho^{\mathrm{U}}_{k}(r,s) &\geq s,\\
            \label{issc_e:adtmodified3} \sum_{k\in\P_{U}}\rho^{\mathrm{U}}_{k}(r,s) &\leq \overline{\rho}\cdot s,\\
            \intertext{and}
            \label{issc_e:adtmodified4} \sum_{(m,n)\in E(\P)}\rho_{mn}(r,s) &\leq \frac{s}{\tau_{a}}.
        \end{align}
        Then the switched system \eqref{issc_e:swsys} is ISS for every switching signal $\sigma\in\mathcal{S}$ satisfying \eqref{issc_e:isslb}, \eqref{issc_e:nonissub}, and \eqref{issc_e:swub} for every interval $]r,r+s]\subset[0,+\infty[$ of time.
    \end{proposition}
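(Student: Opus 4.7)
The plan is to derive Proposition \ref{issc_cor:IOSS2012} as a direct corollary of Theorem \ref{issc_t:mainres1}. Assumptions \ref{issc_assumption:key} and \ref{issc_assumption:muijreln} are already in place by hypothesis, so the task reduces to exhibiting constants $c_{1}$, $c_{2}$ and a class $\mathcal{F}\mathcal{K}_{\infty}$ function $\rho$ with $\rho(0,0)=0$ for which the two key inequalities \eqref{issc_e:maincondn1} and \eqref{issc_e:maincondn2} of Theorem \ref{issc_t:mainres1} hold.

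For \eqref{issc_e:maincondn1} I would make the linear ansatz $\rho(r,s) \Let \eta s$, where
\[
\eta \Let \lambda_{S}(1-\overline{\rho}) - \lambda_{U}\overline{\rho} - \frac{\ln\mu}{\tau_{a}}.
\]
The hypotheses on $\overline{\rho}$ and on $\tau_{a}$ make $\eta$ strictly positive, and this $\rho$ is trivially in $\mathcal{F}\mathcal{K}_{\infty}$ with $\rho(0,0) = 0$. Using the constant values $\abs{\lambda_{j}} = \lambda_{S}$, $\abs{\lambda_{k}} = \lambda_{U}$, $\mu_{mn}=\mu$, rewriting \eqref{issc_e:adtmodified3p} as $\sum_{j\in\P_{S}}\rho^{\mathrm{S}}_{j}(r,s) \geq s - \sum_{k\in\P_{U}}\rho^{\mathrm{U}}_{k}(r,s)$, and then invoking the upper bounds \eqref{issc_e:adtmodified3} and \eqref{issc_e:adtmodified4}, I would estimate the left-hand side of \eqref{issc_e:maincondn1} by
\[
-\lambda_{S}\,s + (\lambda_{S}+\lambda_{U})\overline{\rho}\,s + \frac{\ln\mu}{\tau_{a}}\,s \;=\; -\eta s,
\]
which is exactly $-\rho(r,s)$. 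Hence \eqref{issc_e:maincondn1} holds with $c_{1}=0$.

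For \eqref{issc_e:maincondn2} I would observe that summing \eqref{issc_e:swub} over $(m,n)\in E(\P)$ and invoking \eqref{issc_e:adtmodified4} gives the familiar average-dwell-time bound
\[
\Ntsigma - i \;=\; \mathrm{N}_{\sigma}(\tau_{i},t) \;\leq\; \mathrm{N}_{0} + \frac{t-\tau_{i}}{\tau_{a}},
\]
with $\mathrm{N}_{0} \Let \sum_{(m,n)\in E(\P)}\overline{\mathrm{N}}_{mn}$. Rearranging yields $t - \tau_{i} \geq \tau_{a}(\Ntsigma - i - \mathrm{N}_{0})$ whenever $\Ntsigma - i \geq \mathrm{N}_{0}$. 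Splitting $\sum_{i=0}^{\Ntsigma} e^{-\eta(t-\tau_{i})}$ into the topmost $\lceil\mathrm{N}_{0}\rceil + 1$ terms, each trivially bounded by $1$, and the remaining terms, bounded by a geometric series with common ratio $e^{-\eta\tau_{a}}\in (0,1)$, furnishes a bound uniform in $t$. Taking $c_{2} \Let \lceil\mathrm{N}_{0}\rceil + 1 + (1-e^{-\eta\tau_{a}})^{-1}$ discharges \eqref{issc_e:maincondn2}, and Theorem \ref{issc_t:mainres1} then delivers the claimed ISS conclusion.

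The only genuinely delicate step is the passage from the integrated ADT-type bound to the geometric estimate for the sum in \eqref{issc_e:maincondn2}; once $\eta$ is chosen correctly, the verification of \eqref{issc_e:maincondn1} is pure bookkeeping. One subtlety worth mentioning is the sign of $\ln\mu$: when $\mu\geq 1$ the estimate goes through as written, and when $\mu < 1$ the term $(\ln\mu)\sum_{(m,n)\in E(\P)}\rho_{mn}(r,s)$ is non-positive and can simply be discarded, so the conclusion holds in either case.
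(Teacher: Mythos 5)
Your proposal is correct and follows essentially the same route as the paper: reduce to Theorem \ref{issc_t:mainres1} by verifying \eqref{issc_e:maincondn1} with $c_{1}=0$ and a $\rho$ linear in its second argument (your $\eta$ is exactly the paper's $(\ln\mu)\varepsilon$ margin from \eqref{issc_e:adtmodified5}), and then bounding the sum in \eqref{issc_e:maincondn2} by a geometric series with ratio $e^{-\eta\tau_{a}}$. The only difference is cosmetic: you obtain the geometric bound directly for every compliant $\sigma$ from $t-\tau_{i}\geq\tau_{a}(\Ntsigma-i-\mathrm{N}_{0})$, whereas the paper evaluates the sum on the worst-case switch placement identified in Remark \ref{issct_r:adtcompa}; your version is marginally more self-contained, and your aside on the sign of $\ln\mu$ (and the degenerate case $\mu=1$) is a small point the paper glosses over.
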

    \begin{remark}
        Given a family of systems in which not all subsystems are input/output-to-state stable (IOSS), in \cite[Theorem 2]{Liberzon_IOSS} the authors identified a class of switching signals obeying the average dwell time property under which the resulting switched system is IOSS. Our Proposition \ref{issc_cor:IOSS2012} is an analog of an ISS version of \cite[Theorem 2]{Liberzon_IOSS} obtained as a corollary of our main result Theorem \ref{issc_t:mainres1}.
    \end{remark}
    \begin{remark}
        Since under average dwell time switching, the bounds on every interval $]r,r+s]\subset[0,+\infty[$ of time are independent of the initial point $r\in[0,+\infty[$ of the interval, the assumption that the class $\mathcal{F}\mathcal{K}_{\infty}$ functions $\rho^{\mathrm{S}}_{j}$, ${j\in\P_{S}}$, $\rho^{\mathrm{U}}_{k}$, ${k\in\P_{U}}$, $\rho_{mn}$, ${(m,n)\in E(\P)}$ satisfy \eqref{e:as1}-\eqref{e:as3} is natural.
    \end{remark}
    \begin{remark}
    \label{issc_r:adtremark}
         The bound on $\overline{\rho}$ ensures that $0<\overline{\rho}<1$. Consequently, the activation of unstable systems on every interval of time is restricted. A switching signal $\sigma$ that satisfies \eqref{issc_e:swub} on every interval $]r,r+s]\subset[0,+\infty[$ of time such that hypothesis \eqref{issc_e:adtmodified4} holds with $\rho_{mn}(r,s)$, ${(m,n)\in E(\P)}$ being independent of the first argument {implies that} the switching signal satisfies the average dwell time property \cite[p.\ 58]{Liberzon}. We have
        \begin{align*}
            \displaystyle\mathrm{N}_{\sigma}(s,t) = \sum_{(m,n)\in E(\P)}\mathrm{N}_{mn}(s,t) \leq \mnsum\overline{\mathrm{N}}_{mn} + \mnsum\rho_{mn}(s,t-s).
        \end{align*}
        Choose $\mathrm{N}_{0}$ such that $\mnsum\overline{\mathrm{N}}_{mn}\leq \mathrm{N}_{0}$. By hypothesis \eqref{issc_e:adtmodified4}, we have\\$\displaystyle{\mnsum\rho_{mn}(s,t-s) \leq \frac{t-s}{\tau_{a}}}$. Consequently, $\displaystyle{\mathrm{N}_{\sigma}(s,t) \leq \mathrm{N}_{0} + \frac{t-s}{\tau_{a}}}$ for positive constants $\mathrm{N}_{0}$ and $\tau_{a}$.
    \end{remark}
    A special case of \cite[Theorem 2]{Liberzon_IOSS} where all subsystems are ISS was treated in \cite[Theorem 3.1]{chatterjee07}. A subclass of average dwell time switching signals was proposed under which the resulting switched system is ISS. We recast an analog of \cite[Theorem 3.1]{chatterjee07} as a corollary of our main result:
    \begin{proposition}
    \label{issc_cor:ISS07}
        Consider the family of systems \eqref{issc_e:family}. Let $\P_{U} = \emptyset$. Suppose that Assumption \ref{issc_assumption:key} holds with $\abs{\lambda_{j}} = \lambda_{0}$ for all $j\in\P_{S}$ and Assumption \ref{issc_assumption:muijreln} holds with $\mu_{mn} = \mu$ for all $(m,n)\in E(\P)$. Let $\tau_{a}$ be a constant satisfying
        \begin{align}
        \label{issc_e:adtmodified2}
            \tau_{a} \in \biggr]\frac{\ln\mu}{\lambda_{0}},+\infty\biggl[.
        \end{align}
        Let the class $\mathcal{F}\mathcal{K}_{\infty}$ functions $\rho^{\mathrm{S}}_{j}$, $j\in\P_{S}$ and $\rho_{mn}$, $(m,n)\in E(\P)$ described in Assumption \ref{issc_assumption:swsigbounds} be such that for all $r_{1},r_{2} \geq 0$ and all $s > 0$
        \begin{align}
            \label{e:as4} \rho^{\mathrm{S}}_{j}(r_{1},s) &= \rho^{\mathrm{S}}_{j}(r_{2},s),
            \intertext{and}
            \label{e:as5} \rho_{mn}(r_{1},s) &= \rho_{mn}(r_{2},s).
        \end{align}
        Moreover, let for every interval $]r,r+s]\subset[0,+\infty[$ of time
        \begin{align}
        \label{issc_e:adtmodified0}
            \sum_{j\in\P_{S}}\rho^{\mathrm{S}}_{j}(r,s) &\geq s,\\
        \intertext{and}
        \label{issc_e:adtmodified1}
            \sum_{(m,n)\in E(\P)}\rho_{mn}(r,s) &\leq \frac{s}{\tau_{a}}.
        \end{align}
        Then the switched system \eqref{issc_e:swsys} is ISS for every switching signal $\sigma\in\mathcal{S}$ that for every interval $]r,r+s]\subset[0,+\infty[$ of time, satisfies \eqref{issc_e:isslb} and \eqref{issc_e:swub}.
    \end{proposition}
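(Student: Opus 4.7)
The plan is to deduce this proposition as a corollary of Theorem \ref{issc_t:mainres1} by making a judicious choice of the class $\mathcal{FK}_\infty$ function $\rho$ and then verifying that the two main hypotheses \eqref{issc_e:maincondn1} and \eqref{issc_e:maincondn2} are consequences of the average-dwell-time-type bounds \eqref{issc_e:adtmodified0}--\eqref{issc_e:adtmodified1}.

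First I would set $\delta := \lambda_0 - \frac{\ln\mu}{\tau_a}$, which is strictly positive by the hypothesis \eqref{issc_e:adtmodified2} on $\tau_a$, and choose $\rho(r,s) := \delta s$. This $\rho$ is obviously in class $\mathcal{FK}_\infty$ and satisfies $\rho(0,0)=0$. To verify \eqref{issc_e:maincondn1}, observe that $\P_U=\emptyset$ removes the middle sum entirely, $|\lambda_j|=\lambda_0$ pulls out of the first sum, and $\ln\mu_{mn}=\ln\mu$ pulls out of the third sum. Combining \eqref{issc_e:adtmodified0} and \eqref{issc_e:adtmodified1} with the appropriate signs then yields
\begin{align*}
  -\lambda_0\sum_{j\in\P_S}\rho^{\mathrm{S}}_j(r,s) + (\ln\mu)\sum_{(m,n)\in E(\P)}\rho_{mn}(r,s)
  \;\le\; -\lambda_0\, s + \frac{\ln\mu}{\tau_a}\, s
  \;=\; -\delta s
  \;=\; -\rho(r,s),
\end{align*}
so \eqref{issc_e:maincondn1} holds with $c_1=0$.

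The substantive step is the verification of the summability condition \eqref{issc_e:maincondn2}. Here I would first invoke Remark \ref{issc_r:adtremark}, which establishes that any switching signal obeying \eqref{issc_e:swub} together with \eqref{issc_e:adtmodified1} satisfies the classical average dwell time bound $N_\sigma(s,t)\le N_0 + (t-s)/\tau_a$ for $N_0 := \sum_{(m,n)\in E(\P)}\overline{\mathrm{N}}_{mn}$. Applying this to the interval $]\tau_i,t]$ gives $N_\sigma - i \le N_0 + (t-\tau_i)/\tau_a$, i.e. $t-\tau_i \ge (N_\sigma-i-N_0)\tau_a$ whenever $N_\sigma-i\ge N_0$. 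Splitting the sum at $i = N_\sigma - N_0$, bounding the first $N_0+1$ terms crudely by $1$, and using the exponential decay estimate on the remaining terms, I obtain
\begin{align*}
  \sum_{i=0}^{N_\sigma}\exp\!\bigl(-\delta(t-\tau_i)\bigr)
  \;\le\; (N_0+1) + \sum_{m=1}^{\infty}e^{-\delta\tau_a m}
  \;=\; (N_0+1) + \frac{1}{e^{\delta\tau_a}-1},
\end{align*}
which is a finite bound independent of $t$; hence \eqref{issc_e:maincondn2} holds with $c_2$ equal to this constant.

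With both hypotheses of Theorem \ref{issc_t:mainres1} verified, its conclusion --- uniform ISS of the switched system \eqref{issc_e:swsys} over all admissible $\sigma$ satisfying \eqref{issc_e:isslb} and \eqref{issc_e:swub} (condition \eqref{issc_e:nonissub} being vacuous since $\P_U=\emptyset$) --- gives exactly the statement of the proposition. The only place where care is needed is in the estimate for \eqref{issc_e:maincondn2}: one must convert the ADT bound into a lower bound on $t-\tau_i$, which runs backward through the enumeration of switching instants, and handle the boundary terms $i>N_\sigma-N_0$ separately. Everything else is an algebraic consequence of the stronger hypotheses imposed in the proposition.
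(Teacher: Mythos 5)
Your proposal is correct and follows essentially the same route as the paper: reduce to Theorem \ref{issc_t:mainres1} by taking $\rho$ linear in its second argument with $c_{1}=0$ (the paper writes the slope as $\varepsilon\ln\mu$ with $1/\tau_{a}\le\lambda_{0}/\ln\mu-\varepsilon$, which is your $\delta$ in disguise), and then bound the series in \eqref{issc_e:maincondn2} by a geometric series using the average dwell time property from Remark \ref{issc_r:adtremark}. If anything, your derivation of the lower bound $t-\tau_{i}\ge(\mathrm{N}_{\sigma}-i-\mathrm{N}_{0})\tau_{a}$ directly from the ADT inequality on $]\tau_{i},t]$ is tighter than the paper's argument, which verifies the summability only for the ``worst case'' equispaced placement of switching instants.
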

    \begin{remark}
        Since $\P_{U} = \emptyset$, condition \eqref{issc_e:nonissub} is automatically satisfied. A switching signal that satisfies \eqref{issc_e:swub} such that \eqref{issc_e:adtmodified1} holds implies that the switching signal satisfies the average dwell time property as explained in Remark \ref{issc_r:adtremark}.
    \end{remark}

%%%%%%%%%%%%%%%%%%%%%%%%%%%%%%%%%%%%%%%%%%%%%%%%%%%%%%%%%%%%%%%%%%%%%%%%%%%%%%%%%%%%%%%%%%%%%%%%%%%%%%%%%%%%%%%%%%%%%%%%%%%%%%%%%%%%%%%%%%

\section{Concluding remarks}
\label{s:concln}
    In this article we presented a class of switching signals under which a continuous-time switched system is uniformly ISS. We utilized multiple ISS-Lyapunov-like functions for our analysis and our characterization of stabilizing switching signals allowed the number of switches on any interval of time to grow faster than an affine function of the length of the interval unlike in the case of average dwell time switching. We also discussed two representative prior results: an ISS version of \cite[Theorem 2]{Liberzon_IOSS}, and \cite[Theorem 2]{chatterjee07} in our setting. Our results extend readily to the discrete-time setting.

%\section{Proofs of Main Results}
\section{Proofs}
\label{s:proofs}
    \begin{proof}[Proof of Theorem \ref{issc_t:mainres1}]
        Fix $t > 0$. Then $0 =: \tau_{0}<\tau_{1}<\cdots<\tau_{\Ntsigma}$ are the switching instants before (and including) $t$. In view of \eqref{issc_e:dLyapineq},
        \begin{align}
            V_{\sigma(t)}(x(t)) &\leq \exp\bigl(-\lambda_{\sigma(\tau_{\Ntsigma})}(t-\tau_{\Ntsigma})\bigr)V_{\sigma(t)}(x(\tau_{\Ntsigma})) \nonumber\\
            \label{issc_e:proof1} &\qquad +\supnormterm\int_{\tau_{\Ntsigma}}^{t}\exp\bigl(-\lambda_{\sigma(\tau_{\Ntsigma})}(t-s)\bigr)ds.
        \end{align}
        Applying \eqref{issc_e:muijineq} and iterating the above, we obtain the estimate
        \begin{align}
        \label{issct_e:est}
            V_{\sigma(t)}(x(t)) &\leq \psi_{1}(t)V_{\sigma(0)}(x_{0})+\gamma(\norm{v}_{[0,t]})\psi_{2}(t),
        \end{align}
        where
        \begin{align}
        \label{issc_e:psi1defn}
            \psi_{1}(t) := \exp\left(-\ilsum\lambda_{\sigma(\tau_{i})}S_{i+1}+\imsum\ln\muiterm\right),
        \end{align}
        and
        \begin{align}
            \psi_{2}(t) &:= \ilsum\left(\exp\left(-\klsum\lambda_{\sigma(\tau_{k})}S_{k+1}+\kmsum\ln\mukterm\right)\right. \nonumber\\
            \label{issc_e:psi2defn} &\qquad\left.\times\frac{1}{\lambda_{\sigma(\tau_{i})}}\vphantom{\klsum}\bigl(1-\exp\bigl(-\lambda_{\sigma(\tau_{i})}S_{i+1}\bigr)\bigr)\right).
        \end{align}
        In view of \eqref{issc_e:Lyapineq} we rewrite the estimate \eqref{issct_e:est} as
        \begin{align*}
            \underline{\alpha}(\norm{x(t)})\leq \psi_{1}(t)\overline{\alpha}(\norm{x_{0}})+\supnormterm\psi_{2}(t).
        \end{align*}
        In view of Definition \ref{issc_d:iss} for ISS of \eqref{issc_e:swsys}, we need to first show the following:
        \begin{enumerate}[label = \roman*), leftmargin = *]
            \item $\overline{\alpha}(*)\psi_{1}(\cdot)$ can be bounded above by a class $\KL$ function, and
            \item $\psi_{2}(\cdot)$ is bounded by a constant, say $\overline{\psi}_{2}$.
        \end{enumerate}
       % We begin with ii).

         The function $\psi_{1}(t)$ is
        \begin{align}
            &\exp\left(-\sum_{j\in\P_{S}}\abs{\lambda_{j}}\ztmeasure+\sum_{k\in\P_{U}}\abs{\lambda_{k}}\ztkmeasure\right.\nonumber\\
            &\qquad +\left.\vphantom{\stmeasure}\mnsum(\ln\mu_{mn})\mncount_{0}^{\tau_{\Ntsigma}-1}\right) \nonumber\\
            \label{issc_e:proof3} & =\exp\Biggl(-\sum_{j\in\P_{S}}\abs{\lambda_{j}}\mathrm{T}^{\mathrm{S}}_{j}(0,t)+\sum_{k\in\P_{U}}\abs{\lambda_{k}}\mathrm{T}^{\mathrm{U}}_{k}(0,t)+\mnsum(\ln\mu_{mn})\mathrm{N}_{mn}(0,\tau_{\Ntsigma-1})\Biggr),
        \end{align}
        and $\psi_{2}(t)$ is
        \begin{align}
            &\sum_{j\in\P_{S}}\frac{1}{\abs{\lambda_{j}}}\ijsum\Biggl(\exp\biggl(-\sum_{p\in\P_{S}}\abs{\lambda_{p}}\mathrm{T}^{\mathrm{S}}_{p}(\tau_{i+1},t)+\sum_{q\in\P_{U}}\abs{\lambda_{q}}\mathrm{T}^{\mathrm{U}}_{q}(\tau_{i+1},t) \nonumber\\
            &+\mnsum(\ln\mu_{mn})\mathrm{N}_{mn}(\tau_{i+1},\tau_{\Ntsigma-1})\biggr)\biggl(1-\exp\bigl(-\abs{\lambda_{j}}S_{i+1}\bigr)\biggr)\Biggr) \nonumber\\
            &+\sum_{k\in\P_{U}}\frac{1}{\abs{\lambda_{k}}}\iksum\Biggl(\exp\biggl(-\sum_{p\in\P_{S}}\abs{\lambda_{p}}\mathrm{T}^{\mathrm{S}}_{p}(\tau_{i+1},t)+\sum_{q\in\P_{U}}\abs{\lambda_{q}}\mathrm{T}^{\mathrm{U}}_{q}(\tau_{i+1},t) \nonumber\\
            &+\mnsum(\ln\mu_{mn})\mathrm{N}_{mn}(\tau_{i+1},\tau_{\Ntsigma-1})\biggr)\biggl(1-\exp\bigl(\abs{\lambda_{k}}S_{i+1}\bigr)\biggr)\Biggr)\nonumber\\
            \leq& \sum_{j\in\P_{S}}\frac{1}{\abs{\lambda_{j}}}\ijsum\Biggl(\exp\biggl(-\sum_{p\in\P_{S}}\abs{\lambda_{p}}\mathrm{T}^{\mathrm{S}}_{p}(\tau_{i+1},t)+\sum_{q\in\P_{U}}\abs{\lambda_{q}}\mathrm{T}^{\mathrm{U}}_{q}(\tau_{i+1},t) \nonumber\\
            &+\mnsum(\ln\mu_{mn})\mathrm{N}_{mn}(\tau_{i+1},t)\biggr)\nonumber\\+ &\sum_{k\in\P_{U}}\frac{1}{\abs{\lambda_{k}}}\iksum\Biggl(\exp\biggl(-\sum_{p\in\P_{U}}\abs{\lambda_{p}}\mathrm{T}^{\mathrm{S}}_{p}(\tau_{i},t) +\sum_{q\in\P_{U}}\abs{\lambda_{q}}\mathrm{T}^{\mathrm{U}}_{q}(\tau_{i},t)\nonumber\\
            \label{issc_e:proof4} &+\mnsum(\ln\mu_{mn})\mathrm{N}_{mn}(\tau_{i},t)\biggr) .
            \end{align}
        By hypotheses \eqref{issc_e:isslb}, \eqref{issc_e:nonissub}, and \eqref{issc_e:swub}, we have the right-hand side of \eqref{issc_e:proof4} bounded above by
        \begin{align*}
            &\sum_{j\in\P_{S}}\frac{1}{\abs{\lambda_{j}}}\ijsum\exp\biggl(\sum_{p\in\P_{S}}\abs{\lambda_{p}}\bigl(\overline{\mathrm{T}}^{\mathrm{S}}_{p}-\rho^{\mathrm{S}}_{p}(\tau_{i+1},t-\tau_{i+1})\bigr)&\nonumber\\
            +&\sum_{q\in\P_{U}}\abs{\lambda_{q}}\bigl(\overline{\mathrm{T}}^{\mathrm{U}}_{q}+\rho^{\mathrm{U}}_{q}(\tau_{i+1},t-\tau_{i+1})
            +\mnsum(\ln\mu_{mn})\bigl(\overline{\mathrm{N}}_{mn}+\rho_{mn}(\tau_{i+1},t-\tau_{i+1})\bigr)\biggr) \nonumber\\
            +&\sum_{k\in\P_{U}}\frac{1}{\abs{\lambda_{k}}}\iksum\exp\biggl(\sum_{p\in\P_{S}}\abs{\lambda_{p}}\bigl(\overline{\mathrm{T}}^{\mathrm{S}}_{p}-\rho^{\mathrm{S}}_{p}(\tau_{i},t-\tau_{i})\bigr)\\
            +&\sum_{q\in\P_{U}}\abs{\lambda_{q}}\bigl(\overline{\mathrm{T}}^{\mathrm{U}}_{q}+\rho^{\mathrm{U}}_{q}(\tau_{i},t-\tau_{i})+\mnsum(\ln\mu_{mn})\bigl(\overline{\mathrm{N}}_{mn}+\rho_{mn}(\tau_{i},t-\tau_{i})\bigr)\biggr).
        \end{align*}
        By \eqref{issc_e:maincondn1}, the above expression is bounded above by
        \begin{align}
            & \left(\sum_{j\in\P_{S}}\frac{1}{\abs{\lambda_{j}}}\ijsum\exp\Bigl(c+c_{1}-\rho(\tau_{i+1},t-\tau_{i+1})\Bigr)\right.\nonumber\\
            &\quad\left.+\sum_{k\in\P_{U}}\frac{1}{\abs{\lambda_{k}}}\iksum\exp\Bigl(c+c_{1}-\rho(\tau_{i},t-\tau_{i})\Bigr)\right) \nonumber\\
            \label{issc_e:proof5} \leq& \left(\sum_{j\in\P_{S}}\frac{1}{\abs{\lambda_{j}}}\sum_{i=0}^{\Ntsigma}\exp\bigl(c+c_{1}-\rho(\tau_{i+1},t-\tau_{i+1})\bigr)\right.\nonumber\\&\quad\left.+\sum_{k\in\P_{U}}\frac{1}{\abs{\lambda_{k}}}\sum_{i=0}^{\Ntsigma}\exp\bigl(c+c_{1}-\rho(\tau_{i},t-\tau_{i})\bigr)\right),
        \end{align}
        for some $c > 0$ satisfying
        \[
            \sum_{j\in\P_{S}}\overline{\mathrm{T}}^{\mathrm{S}}_{j}+\sum_{k\in\P_{U}}\overline{\mathrm{T}}^{\mathrm{U}}_{k}+\sum_{(m,n)\in E(\P)}\overline{\mathrm{N}}_{mn} \leq c.
        \]
        In view of \eqref{issc_e:maincondn2} and the fact that $\P$ is finite, both the terms
        \[
            \displaystyle\sum_{j\in\P_{S}}\frac{1}{\abs{\lambda_{j}}}\sum_{i=0}^{\Ntsigma}\exp\bigl(c+c_{1}-\rho(\tau_{i+1},t-\tau_{i+1})\bigr)\quad\text{and}\quad\displaystyle\sum_{k\in\P_{U}}\frac{1}{\abs{\lambda_{k}}}\sum_{i=0}^{\Ntsigma}\exp\bigl(c+c_{1}-\rho(\tau_{i},t-\tau_{i})\bigr)
        \]
        are bounded. Consequently, ii) holds. It remains to verify i). Towards this end, we already see that $\overline{\alpha}\in\Kinfty$ from Assumption \ref{issc_assumption:key}. Therefore, it remains to show that $\psi_{1}(\cdot)$ is bounded above by a function in class $\mathcal{L}$ to complete the proof of i).\footnote{$\mathcal{L} := \bigl\{\gamma:[0,+\infty[\lra[0,+\infty[\:\:\big|\:\:\gamma\:\:\text{is continuous and}\:\:\gamma(s)\searrow 0\:\:\text{as}\:\:s\nearrow +\infty\bigr\}$} By hypotheses \eqref{issc_e:isslb}, \eqref{issc_e:nonissub}, and \eqref{issc_e:swub}, we have $\psi_{1}(t)$ is bounded above by
        \begin{align*}
            \exp& \Biggl(\sum_{j\in\P_{S}}\abs{\lambda_{j}}(\overline{\mathrm{T}}^{\mathrm{S}}_{j}-\rho^{\mathrm{S}}_{j}(0,t))+\sum_{k\in\P_{U}}\abs{\lambda_{k}}(\overline{\mathrm{T}}^{\mathrm{U}}_{k}+\rho^{\mathrm{U}}_{k}(0,t))\\ &+\mnsum(\ln\mu_{mn})(\overline{\mathrm{N}}_{mn}+\rho_{mn}(0,t))\Biggr).
        \end{align*}
        By \eqref{issc_e:maincondn1} the above quantity is at most $\exp\bigl(c+c_{1}-\rho(0,t)\bigr)$,
        which decreases as $t$ increases, and tends to $0$ as $t\tendsto+\infty$. To summarize,
        \[
        \alpha(\norm{x(t)}) \leq \beta(\norm{x_{0}},t) + \chi(\norm{v}_{[0,t]})\:\:\text{for all}\:\: t\geq 0
        \]
        holds with $\alpha(r) := r$, $\displaystyle{\beta(r,s) = \overline{\alpha}(r)\exp\bigl(c+c_{1}-\rho(0,s)\bigr)}$ and $\chi(r):= \gamma(r)\overline{\psi}_{2}$, where
        \begin{align*}
            \overline{\psi}_{2} &= \biggl(\displaystyle\sum_{j\in\P_{S}}\frac{1}{\abs{\lambda_{j}}}\sup_{t}\sum_{i=0}^{\Ntsigma}\exp\bigl(c+c_{1}-\rho(\tau_{i+1},t-\tau_{i+1})\bigr)\\
            &\quad+\displaystyle\sum_{k\in\P_{U}}\frac{1}{\abs{\lambda_{k}}}\sup_{t}\sum_{i=0}^{\Ntsigma}\exp\bigl(c+c_{1}-\rho(\tau_{i},t-\tau_{i})\bigr)\biggr).
            \end{align*}
        This completes our proof for ISS. For uniformity over $\sigma$, we note that the functions $\beta$ and $\chi$ do not depend on the specific switching signal $\sigma$ satisfying \eqref{issc_e:isslb}, \eqref{issc_e:nonissub}, and \eqref{issc_e:swub} under our assumptions.
    \end{proof}

    \begin{proof}[Proof of Lemma \ref{issc_lem:rhoaffine}]
    We express $\rho_{\mathrm{N}}^{-1}(\cdot,t-\cdot)(1)$ by $\rho_{\mathrm{N}}^{-1}(1)$ for notational simplicity. We have
    \begin{align*}
        \sum_{i=0}^{\Ntsigma}&\exp\bigl(-\rho(\tau_{i},t-\tau_{i})\bigr) \leq \sum_{i=0}^{\mathrm{N}_{0}+\lfloor\rho_{\mathrm{N}}(0,t)\rfloor}\exp\bigl(-\rho(\tau_{i},t-\tau_{i})\bigr)\\
        &= \exp(-k_{2})\sum_{i=0}^{\mathrm{N}_{0}+\lfloor\rho_{\mathrm{N}}(0,t)\rfloor}\exp\bigl(-k_{1}\cdot(t-\tau_{i})\bigr)\\
        &= \exp(-k_{2})\biggl(\exp\bigl(-k_{1}\cdot(t)\bigr)+\exp\bigl(-k_{1}\cdot(t-\tau_{1})\bigr)+\cdots\\
        &\quad+\exp\bigl(-k_{1}\cdot(t-\tau_{\lfloor\rho_{\mathrm{N}}(0,t)\rfloor})+\exp\bigl(-k_{1}\cdot(t-\tau_{\lfloor\rho_{\mathrm{N}}(0,t)\rfloor+1})\bigr)+\cdots\\
        &\quad+\exp\bigl(-k_{1}\cdot(t-\tau_{\lfloor\rho_{\mathrm{N}}(0,t)\rfloor+\mathrm{N}_{0}})\bigr)\biggr)\\
        &= \exp(-k_{2})\biggl(1+\mathrm{N}_{0}+\exp(-nk_{1}\rho_{\mathrm{N}}^{-1}(1))\\&\quad+\exp(-(n-1)k_{1}\rho_{\mathrm{N}}^{-1}(1))+\cdots\\&\quad+\exp(-2k_{1}\rho_{\mathrm{N}}^{-1}(1))+\exp(-k_{1}\rho_{\mathrm{N}}^{-1}(1))\biggr)\\
        &\leq \exp(-k_{2})\Biggl(1+\mathrm{N}_{0}+\frac{1}{\exp(-k_{1}\rho_{\mathrm{N}}^{-1}(1))-1}\Biggr).\qedhere
    \end{align*}
    \end{proof}

    \begin{proof}[Proof of Lemma \ref{issc_lem:rhothreehalves}]
    We have
    \begin{align*}
        \sum_{i=0}^{N_{\sigma}(0,t)}&\exp\bigl(-\rho(\tau_{i},t-\tau_{i})\bigr)  \leq \sum_{i=0}^{\mathrm{N}_{0}+\lfloor\rho_{N}(0,t)\rfloor}\exp\bigl(-\rho(\tau_{i},t-\tau_{i})\bigr)\\
        &\leq \exp(-k_{2})\Biggl((\mathrm{N}_{0}+1)+\exp\bigl(-k_{1}(\rho_{N}^{-1}(1))^{3/2}n^{3/2}\bigr)
        +\cdots\\
        &\quad+\exp\bigl(-k_{1}(\rho_{N}^{-1}(1))^{3/2}2^{3/2}\bigr)+\exp\bigl(-k_{1}(\rho_{N}^{-1}(1))^{3/2}\bigr)\Biggr).
    \end{align*}
    We apply the integral test \cite[\S3.3]{KaczorSeries}; we define a new variable $y^{2} := x^{3}$, and compute
    \begin{align*}
        \int_{0}^{+\infty}\exp\bigl(-k_{1}(\rho_{N}^{-1}(1))^{3/2}\bigr)dx &= \frac{2}{3}\int_{0}^{+\infty}y^{-1/3}\exp\bigl(-k_{1}(\rho_{N}^{-1}(1))^{3/2}y\bigr)dy\\
        &= \frac{2}{3k_{1}(\rho_{N}^{-1}(1))^{3/2}}\Gamma\Biggl(\frac{2}{3}\Biggr),
    \end{align*}
    which is finite, showing thereby that $\displaystyle{\sum_{i=0}^{\Ntsigma}\exp\bigl(-\rho(\tau_{i},t-\tau_{i})\bigr)}$ is bounded.
    \end{proof}

    \begin{proof}[Proof of Proposition \ref{issc_cor:IOSS2012}]
    Consider the left-hand side of \eqref{issc_e:maincondn1}. For every interval $]s,t]\subset[0,+\infty[$ of time, we have
    \begin{align*}
        -\sum_{j\in\P_{S}}\abs{\lambda_{j}}\rho^{\mathrm{S}}_{j}(s,t-s) + \sum_{k\in\P_{U}}\abs{\lambda_{k}}\rho^{\mathrm{U}}_{k}(s,t-s)
        +\sum_{(m,n)\in E(\P)}(\ln\mu_{mn})\rho_{mn}(s,t-s).
    \end{align*}
    By hypotheses $\abs{\lambda_{j}} = \lambda_{S}$ for all $j\in\P_{S}$, $\abs{\lambda_{k}} = \lambda_{U}$ for all $k\in\P_{U}$, and $\mu_{mn}=\mu$ for all $(m,n)\in E(\P)$. Consequently, the above quantity is equal to
    \begin{align}
    \label{issc_e:corpf1}
        -\lambda_{S}\sum_{j\in\P_{S}}\rho^{\mathrm{S}}_{j}(s,t-s) + \lambda_{U}\sum_{k\in\P_{U}}\rho^{\mathrm{U}}_{k}(s,t-s) + (\ln\mu)\sum_{(m,n)\in E(\P)}\rho_{mn}(s,t-s).
    \end{align}
    By hypothesis \eqref{issc_e:adtmodified3p}, \eqref{issc_e:adtmodified3}, and \eqref{issc_e:adtmodified4}, the above quantity is at most equal to
    \begin{align}
    \label{issc_e:corpf2}
        -\lambda_{S}\cdot(1-\overline{\rho})\cdot(t-s) + \lambda_{U}\cdot\overline{\rho}\cdot(t-s) + (\ln\mu)\cdot\frac{t-s}{\tau_{a}}.
    \end{align}
    By \eqref{issc_e:adtmodified5},
    \begin{align}
    \label{issc_e:corpf3}
        \frac{1}{\tau_{a}} \leq \frac{\lambda_{S}\cdot(1-\overline{\rho})-\lambda_{U}\cdot\overline{\rho}}{\ln\mu} - \varepsilon\:\:\text{for some}\:\:\varepsilon > 0.
    \end{align}
    From \eqref{issc_e:corpf3}, we have that \eqref{issc_e:corpf2} is bounded above by
    \begin{align*}
        &-\lambda_{S}\cdot(1-\overline{\rho})\cdot(t-s) + \lambda_{U}\cdot\overline{\rho}\cdot(t-s) + (\ln\mu)\cdot\frac{(\lambda_{S}\cdot(1-\overline{\rho})-\lambda_{U}\cdot\overline{\rho})}{(\ln\mu)}\cdot(t-s)\\
        &\quad- (\ln\mu)\cdot\varepsilon\cdot(t-s)\\
        &= -\lambda_{S}\cdot(1-\overline{\rho})\cdot(t-s) + \lambda_{U}\cdot\overline{\rho}\cdot(t-s) + \lambda_{S}\cdot(1-\overline{\rho})\cdot(t-s)\\&\quad- \lambda_{U}\cdot\overline{\rho}\cdot(t-s) - (\ln\mu)\cdot\varepsilon\cdot(t-s)\\
        &= -(\ln\mu)\cdot\varepsilon\cdot(t-s,)
    \end{align*}
    which is equivalent to $c_{1}-\rho(s,t-s)$ with $c_{1} = 0$ and $\rho$ linear in the second argument.

    \emph{Claim}: The series $\displaystyle{\sum_{i=0}^{\Ntsigma}}\exp\bigl(-\varepsilon\cdot(\ln\mu)\cdot(t-\tau_{i})\bigr)$ for some $\varepsilon > 0$, is bounded with respect to $t$ under average dwell time switching.\\
    %For switching signals obeying average dwell time property, we identify \emph{worst case switching}, i.e., at most how close to $t$, $\tau_{i}$'s can be placed, as follows: For every length of time (even if $t-s \ll \tau_{a}$), at most $\mathrm{N}_{0}$ switches are allowed. So these many switches can be placed arbitrarily close to $t$. As regard to the remaining $\lfloor\frac{t}{\tau_{a}}\rfloor$ switches,
%    \begin{itemize}[label = $\circ$, leftmargin = *]
%        \item on every interval of length $t-(t-n\tau_{a})$, at most $\mathrm{N}_{0}+n$ switches are allowed,
%        \item on every interval of length $(t-(n-1)\tau_{a}) - (t-n\tau_{a})$, at most $\mathrm{N}_{0}+1$ switches are allowed,
%        \item etc.
%    \end{itemize}
        Let $\varepsilon' = \varepsilon\cdot(\ln\mu)$. We consider the worst case switching identified in Remark \ref{issct_r:adtcompa}.
    \begin{align*}
        \sum_{i=0}^{\mathrm{N}_{0}+\lfloor\frac{t}{\tau_{a}}\rfloor}&\exp\bigl(-\varepsilon'\cdot(t-\tau_{i+1})\bigr) \leq \exp\bigl(-\varepsilon'\cdot(t-\tau_{1})\bigr) + \exp\bigl(-\varepsilon'\cdot(t-\tau_{2})\bigr) + \cdots\\
        &\quad+ \exp\bigl(-\varepsilon'\cdot(t-\tau_{\lfloor\frac{t}{\tau_{a}}\rfloor})\bigr) + \mathrm{N}_{0}\\
        &= \exp\biggl(t-\bigl(t-\lfloor\frac{t}{\tau_{a}}\rfloor\tau_{a}\bigr)\biggr) + \exp\biggl(t-(t-\bigl(\lfloor\frac{t}{\tau_{a}}\rfloor-1\bigr)\tau_{a})\biggr)\\
        & \quad+ \cdots + \exp\bigl(t-(t-2\tau_{a})\bigr) + \exp\bigl(t-(t-\tau_{a})\bigr) + \mathrm{N}_{0}\\
        &= \exp(-\varepsilon' t) + \exp\biggl(-\varepsilon'\bigl(\lfloor\frac{t}{\tau_{a}}\rfloor-1\bigr)\tau_{a}\biggr) + \cdots + \exp\bigl(-2\varepsilon'\tau_{a}\bigr)\\
        & \quad+ \exp\bigl(-\varepsilon'\tau_{a}\bigr) + \mathrm{N}_{0}\\
        &\leq 1 + \mathrm{N}_{0} + \exp(-\varepsilon'\tau_{a})\frac{1-(\exp(-\varepsilon'\tau_{a}))^{\lfloor\frac{t}{\tau_{a}}+1\rfloor}}{1-\exp(-\varepsilon'\tau_{a})}\\
        &\leq 1 + \mathrm{N}_{0} + \frac{1}{\exp(\varepsilon'\tau_{a})-1}.
    \end{align*}
    This proves our claim and the assertion of Theorem \ref{issc_t:mainres1} follows at once.

    %Boundedness of the series $\displaystyle{\sum_{i=0}^{\Ntsigma}}\exp\bigl(-\varepsilon\cdot(\ln\mu)\cdot(t-\tau_{i})\bigr)$ follows immediately from Lemma \ref{issc_lem:rhoaffine} with the monotone increasing and right-continuous function $\rho$ being $\rho(r) := (\varepsilon \ln\mu)r$.
\end{proof}

    \begin{proof}[Proof of Proposition \ref{issc_cor:ISS07} (Sketch)]
        Observe that under the hypothesis $\P_{U} = \emptyset$, for every interval $]s,t]\subset[0,+\infty[$ of time, the left-hand side of \eqref{issc_e:maincondn1} becomes
        \begin{align*}
        -\sum_{j\in\P_{S}}\abs{\lambda_{j}}\rho^{\mathrm{S}}_{j}(s,t-s)
        +\sum_{(m,n)\in E(\P)}(\ln\mu_{mn})\rho_{mn}(s,t-s).
    \end{align*}
    The rest of the proof for Proposition \ref{issc_cor:ISS07} follows under the same set of arguments as in the proof of Proposition \ref{issc_cor:IOSS2012}.
    \end{proof}

%\bibliographystyle{siam}
%\bibliography{refr_ak}

\begin{thebibliography}{10}

\bibitem{Angeli99}
{\sc D.~Angeli and E.~D. Sontag}, {\em Forward completeness, unboundedness
  observability, and their {L}yapunov characterizations}, Systems Control
  Lett., 38 (1999), pp.~209--217.

\bibitem{DePersis2003}
{\sc C.~De~Persis, R.~De~Santis, and A.~S. Morse}, {\em Switched nonlinear
  systems with state-dependent dwell-time}, Systems Control Lett., 50 (2003),
  pp.~291--302.

\bibitem{Filippov}
{\sc A.~F. Filippov}, {\em Differential equations with discontinuous righthand
  sides}, vol.~18 of Mathematics and its Applications (Soviet Series), Kluwer
  Academic Publishers Group, Dordrecht, 1988.
\newblock Translated from the Russian.

\bibitem{HespanhaMorse}
{\sc J.~P. Hespanha and A.~S. Morse}, {\em Stability of switched systems with
  average dwell-time}, in Proc. of the 38th Conf. on Decision and Contr., Dec
  1999, pp.~2655--2660.

\bibitem{KaczorSeries}
{\sc W.~J. Kaczor and M.~T. Nowak}, {\em Problems in Mathematical Analysis I},
  vol.~4, American Mathematical Society, 2000.
\newblock Real Numbers, Sequences and Series.

\bibitem{Krichman01}
{\sc M.~Krichman, E.~D. Sontag, and Y.~Wang}, {\em Input-output-to-state
  stability}, SIAM J. Control Optim., 39 (2001), pp.~1874--1928 (electronic).

\bibitem{TACsub}
{\sc A.~Kundu and D.~Chatterjee}, {\em Stabilizing switching signals for
  switched systems}.
\newblock To appear in IEEE Transactions on Automatic Control, doi:
  10.1109/TAC.2014.2335291.

\bibitem{Liberzon}
{\sc D.~Liberzon}, {\em Switching in systems and control}, Systems \& Control:
  Foundations \& Applications, Birkh\"auser Boston Inc., Boston, MA, 2003.

\bibitem{Morse1996}
{\sc A.~S. Morse}, {\em Supervisory control of families of linear set-point
  controllers. {I}. {E}xact matching}, IEEE Trans. Automat. Control, 41 (1996),
  pp.~1413--1431.

\bibitem{Liberzon_IOSS}
{\sc M.~A. M{\"u}ller and D.~Liberzon}, {\em Input/output-to-state stability
  and state-norm estimators for switched nonlinear systems}, Automatica J.
  IFAC, 48 (2012), pp.~2029--2039.

\bibitem{Sontag95}
{\sc E.~D. Sontag and Y.~Wang}, {\em On characterizations of the input-to-state
  stability property}, Systems Control Lett., 24 (1995), pp.~351--359.

\bibitem{chatterjee07}
{\sc L.~Vu, D.~Chatterjee, and D.~Liberzon}, {\em Input-to-state stability of
  switched systems and switching adaptive control}, Automatica, 43 (2007),
  pp.~639--646.

\bibitem{XieWenLi2001}
{\sc W.~Xie, C.~Wen, and Z.~Li}, {\em Input-to-state stabilization of switched
  nonlinear systems}, IEEE Trans. Automat. Control, 46 (2001), pp.~1111--1116.

\bibitem{Yang14}
{\sc G.~Yang and D.~Liberzon}, {\em Input-to-state stability for switched
  systems with unstable subsystems: A hybrid lyapunov construction}, in Proc.
  of the 53rd Conf. on Decision and Contr., Dec 2014, pp.~6240--6245.

\end{thebibliography}

%%%%%%%%%%%%%%%%%%%%%%%%%%%%%%%%%%%%%%%%%%%%%%%%%%%%%%%%%%%%%%%%%%%%%%%%%%%%%%%%%%%%%%%%%%%%%%%%%%%%%%%%%%%%%%%%%%%%%%
\end{document}